\newtheorem{thm}{Theorem}
\newtheorem{lem}{Lemma}
\newtheorem{cor}{Corollary}
\theoremstyle{definition}
\newtheorem*{rem}{Remarks}
\newtheorem*{rem2}{Remark}
\newtheorem*{example}{Example}
\newtheorem*{acknowledgement}{Acknowledgement}
\newcommand{\vertiii}[1]{{\left\vert\kern-0.25ex\left\vert\kern-0.25ex\left\vert
		#1 \right\vert\kern-0.25ex\right\vert\kern-0.25ex\right\vert}}
\def \lim   {\text {\rm lim}}
\begin{document}
	
	\title[]{Approximate recoverability and the quantum data processing inequality}
	
	\author{Saptak Bhattacharya}
	
	\address{Indian Statistical Institute\\
		New Delhi 110016\\
		India}
	\email{saptak21r@isid.ac.in}
	
	
	
	
	\begin{abstract}
		In this paper, we discuss the quantum data processing inequality and its refinements that are physically meaningful in the context of approximate recoverability. An important conjecture regarding this due to Seshadreesan et. al. in J. Phys. A: Math. Theor. 48 (2015) is disproved. We prove some inequalities capturing universal approximate recoverability with the Petz recovery map for the sandwiched quasi and R\'enyi relative entropies for the parameter $t=2$. We also obtain convexity theorems on some parametrized versions of the relative entropy and fidelity, which can be of independent interest.
	\end{abstract}
	\subjclass[2010]{ 94A17, 15A45}
	
	\keywords{log-convexity, relative entropy, fidelity, recoverability}
	\date{}
	\maketitle
	
	\section{Introduction}

    Let $A, B$ be $n\times n$ density matrices with $\textrm{supp}(A)\subset\textrm{supp}(B)$. The {\it relative entropy}, or the {\it Kullback-Liebler divergence} of $A$ with respect to $B$ is defined as
    \[D(A|B)=\textrm{tr}A(\ln A-\ln B).\label{e1}\tag{1}\] This measures the distinguishability of $A$ with respect to $B$. Let $\phi:M_n(\mathbb{C})\to M_k(\mathbb{C})$ be a quantum channel. This is a completely positive, trace preserving map.
    The data processing inequality, originally proved by Lindblad in \cite{gl} is a result of fundamental importance in quantum information theory. It states that \[D(\phi(A)|\phi(B))\leq D(A|B).\label{e2}\tag{2}\] Intuitively, this means that states become more indistinguishable when a quantum channel is applied.
    \medskip

    It was proved by Petz \cite{dp} that equality holds in \eqref{e2} if and only if there exists a channel $\mathcal{R}:M_k(\mathbb{C})\to M_n(\mathbb{C})$ depending only on $\phi$ and $B$ such that $\mathcal{R}(\phi(B))=B$ and $\mathcal{R}(\phi(A))=A$. Petz constructed such a channel explicitly, henceforth known as the {\it Petz recovery map} \cite{hp, pr}. This is defined as \[\begin{aligned}&\mathcal{R}_{\phi, B}(Y)&=B^{1/2}\phi^{*}\big(\phi(B)^{-1/2}Y\phi(B)^{-1/2}\big)B^{1/2}.\end{aligned}\label{e3}\tag{3}\] This means that there exists a channel depending only on $\phi$ and $B$, which recovers $B$ from $\phi(B)$ while also recovering all such $A$ for which there is no loss of distinguishability information with respect to $B$. However, perfect recovery cannot be expected in reality due to noise in the quantm channel. This is why it becomes important to consider refinements of the DPI which would be physically meaningful, in the sense that all states $A$ for which the {\it relative entropy difference} $D(A|B)-D(\phi(A)|\phi(B))$ is small can be approximately recovered. The last decade saw a surge in activity on this topic, with influential work by Fawzi et. al. \cite{fr}, Sheshadreesan et. al. \cite{ws}, Wilde \cite{mw}, Junge et. al. \cite{jw}, Carlen et. al. \cite{cv}, Vershynina \cite{ver} and Gao et. al. \cite{gw}, some of which is discussed below.
    \medskip
    
    Let $A, B\in M_n(\mathbb{C})$ be density matrices with supp$(A)\subset$ supp$(B)$. Recall that the {\it fidelity} between density matrices $A$ and $B$ is defined as $$F(A|B)=\textrm{tr}(A^{1/2}BA^{1/2})^{1/2}.$$
    This measures how close the states $A$ and $B$ are. It is known that $F(A|B)$ is symmetric in $A$ and $B$, $0\leq F(A|B)\leq 1$ and that it is jointly concave, a proof of which can be found in \cite{uhl}. Let $\phi:M_n(\mathbb{C})\to M_k(\mathbb{C})$ be a channel. Consider the Stinespring representaion of $\phi$ given by $$\phi(X)=\textrm{tr}_{2} VXV^{*}, X\in M_n(\mathbb{C})$$ where $V: \mathbb{C}^n\to\mathbb{C}^m\otimes\mathbb{C}^k$ is an isometry for some $m\in\mathbb{N}$. The Petz recovery map $\mathcal{R}_{\phi, B}$ defined in \eqref{e3} can then be written as \[\begin{aligned}\mathcal{R}_{\phi, B}(Y)&=B^{1/2}V^{*}(I_m\otimes\phi(B)^{-1/2}Y\phi(B)^{-1/2})VB^{1/2}\end{aligned}\] for all $Y\in M_{k}(\mathbb{C})$. Seshadreesan et. al. conjectured in \cite{ws} that \[-2\ln \big[F(A|\mathcal{R}_{\phi, B}(\phi(A)))\big]\leq D(A|B)-D(\phi(A)|\phi(B)).\label{e4}\tag{4}\] This arose from their study of a R\'enyi generalization of the relative entropy difference, which is defined in \cite{ws} and \cite{mw} as $$\tilde{\Delta}_t(A, B,\phi)=\frac{2t}{t-1}\ln\big|\big|(I_m\otimes\phi(A)^{\frac{1-t}{2t}}\phi(B)^{\frac{t-1}{2t}})VB^{\frac{1-t}{2t}}A^{\frac{1}{2}}\big|\big|_{2t}$$ for $t\in[\frac{1}{2},1)\cup(1,\infty)$.
    
    Seshadreesan et. al. (\cite{ws}) proved that $\tilde{\Delta}_t(A, B,\phi)$ converges to $D(A|B)-D(\phi(A)|\phi(B))$ as $t\to 1$ and conjectured that $\tilde{\Delta}_t(A, B,\phi)$ is monotonically increasing in $t$, thereby giving inequality \eqref{e4} as a natural consequence. This conjecture inspired a lot of subsequent research, leading to other significant refinements of the data processing inequality. Li and Winter discussed it along with some other related conjectures in \cite{lw}, and noted that it is favoured by numerical simulations. However, they also disproved a stronger inequality with a counterexample. Wilde discussed the conjecture in \cite{mw}, and used interpolation techniques to prove that \[-2\ln \big[\sup_{t\in\mathbb{R}}F(A|\mathcal{R}^{t}_{\phi, B}(\phi(A)))\big]\leq D(A|B)-D(\phi(A)|\phi(B))\label{e5}\tag{5}\] where $\mathcal{R}^{t}_{\phi, B}$ is a {\it rotated} Petz recovery map given by \[\mathcal{R}^{t}_{\phi, B}(Y)=B^{\frac{1}{2}+it}\big(\phi(B)^{-\frac{1}{2}-it}Y\phi(B)^{-\frac{1}{2}-it}\big)B^{\frac{1}{2}+it}\label{e6}\tag{6}\] for each $t\in\mathbb{R}$.
    This coincides with the usual Petz recovery map $\mathcal{R}_{\phi, B}$ at $t=0$.
    Sutter et. al. (\cite{st}, \cite{bt}) and Junge et. al. (\cite{jw}) then took a significant leap to show the existence of recovery maps $\mathcal{R}^{\prime}$ depending only on $\phi$ and $B$ such that \[-2\ln \big[F(A|\mathcal{R}^{\prime}(\phi(A)))\big]\leq D(A|B)-D(\phi(A)|\phi(B)).\] This has recently been generalized to infinite dimensions in \cite{fh} and \cite{fh2}. 
    \medskip
    
     Amidst all the progress surrounding it, the conjecture itself still remained unsolved. We note that an approach to resolve it can be made by asking whether the map \[t\to\big|\big|(I_m\otimes\phi(A)^{\frac{1-t}{2t}}\phi(B)^{\frac{t-1}{2t}})VB^{\frac{1-t}{2t}}A^{\frac{1}{2}}\big|\big|_{2t}^{2t}\label{e7}\tag{7}\] is log-convex on $[\frac{1}{2},1]$. If true, this would imply the monotonicity of $\tilde{\Delta}_t(A, B,\phi)$ and inequality \eqref{e4} as its consequence. Similar log-convexity results in this paper helped in deducing monotonicity of some R\'enyi generalizations of the relative entropy. In section $3$ of this paper, we carry out a detailed analysis with $2\times 2$ matrices to demonstrate why the map in \eqref{e7} might fail to be log-convex, and then give an explicit counterexample to disprove inequality \eqref{e4}. 
    \medskip
    
    It now becomes natural to ask, can we have some approximate recoverability theorems with the Petz map itself? The first such result is due to Carlen and Vershynina \cite{cv}. Given a $C^{*}$-subalgebra $\mathcal{A}$ of $M_n(\mathbb{C})$ and positive definite density matrices $A$ and $B$, they proved that \[\begin{aligned}4\big[1-F(A|\mathcal{R}(A_0))\big]^4&\leq ||A-\mathcal{R}(A_0)||^4_1\\&\leq K||A^{-1}||^2[D(A|B)-D(A_0|B_0)].\end{aligned}\] where $K$ is a constant independent of $A$, and $X\to X_0$ denotes the conditional expectation onto $\mathcal{A}$ (see \cite{cl}). Later works by Vershynina \cite{ver}, Gao and Wilde \cite{gw} generalized this to $f$-divergences and optimized $f$-divergences. This also includes the sandwiched quasi-relative entropies, which are defined for $t\in (0,1)\cup(1,\infty)$ by \[\mathcal{S}_t(A|B)=\mathrm{tr}(B^{\frac{1-t}{2t}}AB^{\frac{1-t}{2t}})^t.\]

     However, in all these results, the entropy difference is scaled by a factor which is an unbounded function of $A$. For example, in Carlen and Vershynina's result, the factor $||A^{-1}||^2$ appears, which cannot be bounded from above by a constant. To be precise, we are looking for an inequality to capture the phenomenon of {\it uniform} approximate recoverability. Qualitatively, this can be phrased as : 
     
     {\it Let $\mathcal{D}$ be a relative entropy for which Petz's theorem holds and let $B\in M_n(\mathbb{C})$ be positive definite. Let $\phi:M_n(\mathbb{C})\to M_k(\mathbb{C})$ be a channel. Then for every $\varepsilon\textgreater 0$ there exists a $\delta\textgreater0$ such that $F(A|\mathcal{R}\circ\phi(A))\geq 1-\varepsilon$ for all $A$ such that $\mathcal{D}(A|B)-\mathcal{D}(\phi(A)|\phi(B))\leq\delta.$}
     
     The first such inequality is due to Cree and Sorce \cite{cs}, who showed that \[\begin{aligned}4\big[1-F(A|\mathcal{R}\circ\phi(A))\big]^2&\leq ||A-\mathcal{R}\circ\phi(A)||_1^2\\&\leq ||B||^2_2||B^{-1}||[\mathcal{S}_2(A|B)-\mathcal{S}_2(\phi(A)|\phi(B))].\end{aligned}\] This was later refined by Gao et. al. \cite{lg} using information geometry techniques to show that \[4[1-F(A|\mathcal{R}\circ\phi(A))]^2\leq||A-\mathcal{R}\circ\phi(A)||_1\leq [\mathcal{S}_2(A|B)-\mathcal{S}_2(\phi(A)|\phi(B))].\] We give a simple, alternate proof of the above and obtain the following approximate recoverability result with the sandwiched R\'enyi relative entropy $\mathcal{D}_2=\ln\hspace{1mm} \mathcal{S}_2$ : \[||A-\mathcal{R}(A_0)||_1^2\leq\mathcal{L}(||A||,||A_0||)||B^{-1}||[\mathcal{D}_2(A|B)-\mathcal{D}_2(A_0|B_0)].\label{e8}\tag{8}\] Here, $X\to X_0$ is a conditional expectation and $\mathcal{L}(a,b)$ denotes the log-mean between $a$ and $b$ given by $$L(a,b)=\frac{a-b}{\ln a-\ln b}.$$ Physically, inequality $\eqref{e8}$ tells us that the recoverability bound gets better for smaller values of $||A||$, implying that impure states are better recovered. We give an example to show inequality $\eqref{e8}$ is sharp as well. Additionally, our techniques yield the inequality :\[|||\mathrm{tr}_2 A|||.|||(\mathrm{tr}_2 B)^{-1}|||\leq\frac{|||A|||.|||B^{-1}|||}{|||I_n|||}\] for every unitarily invariant tensor norm $|||.|||$. In particular, this includes all Schatten$-p$ norms.
     \medskip

    In sections $2$ and $3$, we study some parametrized quantum entropies and discuss log-convexity with respect to the parameter. It is to be noted here that convexity problems of various kinds have always been of general interest in quantum information theory (see eg. \cite{gl, uhl, lf, ml, pt, bjl, sb, fl, rbh}).
    Given a parameter $\theta \in [0,1]$, the $\theta$-{\it divergence} of $A$ with respect to $B$ is defined to be $\textrm{tr}(A^{\theta} B^{1-\theta})$. Lieb's famous concavity theorem (\cite{el, ta, rbh}) asserts that this is jointly concave in $A$ and $B$. {\it The R\'enyi $\theta$-relative entropy} of $A$ with respect to $B$ is defined as \[D_{\theta}(A|B) = \frac{\ln\textrm{tr}(A^{\theta}B^{1 - \theta})}{\theta - 1}.\]
    This has been discussed in \cite{gl2}, \cite{pt} and \cite{ml}. It is known that $D_{\theta}(A|B)$ is monotonically increasing in $\theta$ and \[\mathop\lim\limits_{\theta\to 1-}D_{\theta}(A|B)= D(A|B).\] In section $2$, we strengthen this monotonicity result and prove that the function $\theta \to \textrm{tr}(A^{\theta}B^{1-\theta})$ is log-convex. Two different proofs are given, the first one using majorization techniques, and the second one by proving a more general log-convexity result on $C^{*}$-algebras. We also give a refinement of Jensen's inequality for the function $t\to t\ln t$.
    \medskip

  We look at the sandwiched quasi-relative entropies $\mathcal{S}_t$ in section $3$. The corresponding sandwiched R\'enyi relative entropies are defined as \[\mathcal{D}_t(A|B)=\frac{\ln \mathcal{S}_t(A|B)}{t-1}.\] They have been introduced and studied in \cite{ml} and \cite{ww}. Beigi proved their DPI for $t\textgreater 1$ in \cite{sb}. Here we show that $t\to\mathcal{S}_t(A|B)$ is log-convex on $[\frac{1}{2}, 1]$ using interpolation techniques, giving us an alternate proof of the monotonicity of $\mathcal{D}_t(A|B)$ in $t$. We also use our results in section $3$ to give an alternate proof of the fact that \[\mathop\lim\limits_{t\to 1}\mathcal{D}_t(A|B)= D(A|B).\] It is to be noted here that the log-convexity results in sections $2$ and $3$ are mainly to motivate a similar approach towards inequality $\eqref{e4}$ in section $3$, where we carry out a thorough analysis along with an explicit $2\times 2$ counterexample to demonstrate why it fails.
    
    \medskip

    \section{The $\theta$-divergence}
		
	We start with a general log-convexity result that will help us prove log-convexity of the function $\theta\to\mathrm{tr}\hspace{1mm}A^{\theta}B^{1-\theta}$.
	\begin{thm}\label{t2} Let $\mathcal{A}$ be a unital $C^{*}$-algebra. Let $x\in\mathcal{A}$ be positive and invertible. Then for every positive definite state $\phi:\mathcal{A}\to\mathbb{C}$ the function $\theta\to\phi(x^{\theta})$ is log-convex on $[0,1]$.\end{thm}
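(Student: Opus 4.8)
The plan is to reduce the statement to the commutative case already established in Lemma~\ref{l1}. Since $x$ is positive and invertible, its spectrum $\sigma(x)$ is a compact subset of $(0,\infty)$, and the unital $C^{*}$-subalgebra $\mathcal{A}_{0}\subseteq\mathcal{A}$ generated by $x$ and the unit is commutative; the continuous functional calculus gives an isomorphism $\mathcal{A}_{0}\cong C(\sigma(x))$ under which $x$ corresponds to the function $t\mapsto t$ and $x^{\theta}$ to $t\mapsto t^{\theta}$ for each $\theta\in[0,1]$. The restriction $\phi|_{\mathcal{A}_{0}}$ is again a state (it is positive and sends the unit to $1$), so by the Riesz--Markov theorem it is integration against a regular Borel probability measure $\nu$ on $\sigma(x)$. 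Consequently
\[\phi(x^{\theta})=\int_{\sigma(x)}t^{\theta}\,d\nu(t),\qquad\theta\in[0,1],\]
and since $t^{\theta}\geq(\min\sigma(x))^{\theta}>0$ on $\sigma(x)$ we get $\phi(x^{\theta})>0$, so $\ln\phi(x^{\theta})$ is well defined.

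Next I would apply Lemma~\ref{l1} on the probability space $(\sigma(x),\nu)$ to the random variables $X(t)=t$ and $Y(t)=1$: both are bounded and bounded away from $0$, so the hypotheses are met, and $X^{\theta}Y^{1-\theta}=t^{\theta}$ gives $\mathbb{E}(X^{\theta}Y^{1-\theta})=\phi(x^{\theta})$. Lemma~\ref{l1} then yields that $\theta\mapsto\ln\phi(x^{\theta})$ is convex on $[0,1]$, i.e.\ $\theta\mapsto\phi(x^{\theta})$ is log-convex. Note that only positivity of $\phi$ has been used. The alternate proof of Theorem~\ref{t1} is obtained by applying this with $\mathcal{A}$ the $C^{*}$-algebra of operators on $M_{n}(\mathbb{C})$ (with the Hilbert--Schmidt inner product), $x$ the positive invertible operator $Z\mapsto AZB^{-1}$, and $\phi(T)=\textrm{tr}\big(B^{1/2}\,T(B^{1/2})\big)$, for which $\phi$ is a state and $\phi(x^{\theta})=\textrm{tr}(A^{\theta}B^{1-\theta})$.

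I do not anticipate a real obstacle: the only things to check are that the restriction of $\phi$ to $\mathcal{A}_{0}$ stays a state (so that Riesz--Markov applies) and the routine hypotheses of Lemma~\ref{l1}. If one wishes to avoid the commutative reduction, there is an equivalent direct argument: put $g(\theta)=\ln\phi(x^{\theta})$ and use that $\ln x$ commutes with $x^{\theta}$ to compute $g'(\theta)=\phi(x^{\theta}\ln x)/\phi(x^{\theta})$ and $g''(\theta)=\big[\phi(x^{\theta})\,\phi\big(x^{\theta}(\ln x)^{2}\big)-\phi(x^{\theta}\ln x)^{2}\big]/\phi(x^{\theta})^{2}$; then $g''\geq 0$ follows from the Cauchy--Schwarz inequality for the positive semidefinite sesquilinear form $(a,b)\mapsto\phi(a^{*}b)$ with $a=x^{\theta/2}$ and $b=x^{\theta/2}\ln x$.
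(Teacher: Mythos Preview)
Your proof is correct but takes a different route from the paper. The paper argues directly with operator matrices: for $t,s\in[0,1]$ the $2\times2$ block
\[
\begin{pmatrix}x^{t}&x^{(t+s)/2}\\x^{(t+s)/2}&x^{s}\end{pmatrix}
\]
is positive (its Schur complement vanishes), and since a state is completely positive, applying $\phi$ entrywise produces a positive $2\times2$ scalar matrix whose nonnegative determinant gives $\phi(x^{(t+s)/2})^{2}\le\phi(x^{t})\phi(x^{s})$; midpoint log-convexity plus continuity finishes the argument. Your main line instead passes to the commutative $C^{*}$-subalgebra generated by $x$, invokes Riesz--Markov, and recycles Lemma~\ref{l1}, which has the virtue of exhibiting the theorem as a direct instance of the classical scalar statement and makes explicit that nothing noncommutative is really happening. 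Your alternative sketch via Cauchy--Schwarz for the form $(a,b)\mapsto\phi(a^{*}b)$ is in fact the same mechanism as the paper's positivity argument (the determinant inequality \emph{is} Cauchy--Schwarz with $a=x^{t/2}$, $b=x^{s/2}$), but packaged differentially so as to yield $g''\ge0$ and hence full convexity in one step rather than via midpoint convexity. Both approaches confirm, as you observe, that the ``positive definite'' hypothesis on $\phi$ is not actually needed: invertibility of $x$ gives $x^{\theta}\ge c\cdot1$ for some $c>0$, so $\phi(x^{\theta})>0$ for any state.
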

	\begin{proof} 
	It suffices to show that for all $t,s \in [0,1]$, $$\ln\phi(x^{\frac{t+s}{2}})\leq\frac{\ln\phi(x^t)+\ln\phi(x^s)}{2}.$$
	Note that the $2\times2$ operator matrix $$\begin{pmatrix}x^t & x^{\frac{t+s}{2}}\\x^{\frac{t+s}{2}} & x^s\end{pmatrix}$$ is positive because the Schur complement of $x^t$ is $0$. Since $\phi$ is a state, it is completely positive and therefore, $$\begin{pmatrix}\phi(x^t) & \phi(x^{\frac{t+s}{2}})\\\phi(x^{\frac{t+s}{2}}) & \phi(x^s)\end{pmatrix} \geq 0.$$ Taking determinant, $$\phi(x^t)\phi(x^s)\geq\phi(x^{\frac{t+s}{2}})^{2}.$$
	Taking log on both sides, $$\ln\phi(x^{\frac{t+s}{2}})\leq\frac{\ln\phi(x^t)+\ln\phi(x^s)}{2}.$$ This completes the proof.
	\end{proof}
    
    Theorem \ref{t2} yields the following consequences :
    \begin{cor}\label{c3} Let $\mathcal{A}$ be a unital $C^{*}$-algebra and let $x\in\mathcal{A}$ be positive and invertible. Then for any positive definite state $\phi:\mathcal{A}\to\mathbb{C}$, $\phi(x\ln x)\geq \phi(x)\big[\ln\phi(x)+2(\ln \phi(x)^{1/2}-\ln\phi(x^{1/2}))\big]$.\end{cor}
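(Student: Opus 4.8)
The plan is to read the inequality off the convexity of $g(\theta)=\ln\phi(x^{\theta})$ furnished by Theorem \ref{t2}, after recognizing that the right-hand side, suitably rewritten, is $\phi(x)$ times the slope of the chord of $g$ between $\theta=\tfrac12$ and $\theta=1$, while the left-hand side is $\phi(x)$ times $g'(1)$.

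First I would set up the differentiability. Since $x$ is positive and invertible, $\ln x$ is a well-defined self-adjoint element of $\mathcal{A}$ and $x^{\theta}=\exp(\theta\ln x)$; this is a norm-differentiable $\mathcal{A}$-valued function of $\theta$ with $\tfrac{d}{d\theta}x^{\theta}=x^{\theta}\ln x$ (the two factors commute). As $\phi$ is a bounded linear functional, $\theta\mapsto\phi(x^{\theta})$ is differentiable with derivative $\phi(x^{\theta}\ln x)$, and it is strictly positive because $x^{\theta}\geq\varepsilon 1$ for some $\varepsilon>0$. Hence $g(\theta)=\ln\phi(x^{\theta})$ is differentiable on $[0,1]$ with $g'(\theta)=\phi(x^{\theta}\ln x)/\phi(x^{\theta})$; in particular $g'(1)=\phi(x\ln x)/\phi(x)$, and this quantity is real since $x\ln x$ is self-adjoint.

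Next, by Theorem \ref{t2} the function $g$ is convex on $[0,1]$, so its derivative is monotone and, as recalled in the introduction, $g'(1)$ dominates every chord slope of $g$ lying to the left of $1$. Applying this to the chord between $\tfrac12$ and $1$ gives
\[g'(1)\ \geq\ \frac{g(1)-g(\tfrac12)}{1-\tfrac12}\ =\ 2\big(\ln\phi(x)-\ln\phi(x^{1/2})\big).\]
Finally I would substitute $g'(1)=\phi(x\ln x)/\phi(x)$, multiply through by $\phi(x)>0$, and rewrite the right-hand side using $2\ln\phi(x)^{1/2}=\ln\phi(x)$:
\[2\big(\ln\phi(x)-\ln\phi(x^{1/2})\big)=\ln\phi(x)+2\big(\ln\phi(x)^{1/2}-\ln\phi(x^{1/2})\big),\]
which is exactly the claimed bound.

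The only delicate point is the justification of the derivative formula $\tfrac{d}{d\theta}\phi(x^{\theta})=\phi(x^{\theta}\ln x)$ in the abstract $C^{*}$-algebra; this is the standard one-parameter group generated by $\ln x$, so it is routine, and everything else is the chord-slope characterization of convexity together with bookkeeping. Alternatively one can avoid invoking differentiability of $g$ away from the endpoint by noting that the Schur-complement argument in Theorem \ref{t2} works verbatim for all real exponents (since $x$ is invertible), so $g$ is convex on all of $\mathbb{R}$; then $\phi(x\ln x)/\phi(x)=g'(1)=\lim_{\theta\to1^{+}}\frac{g(\theta)-g(1)}{\theta-1}\geq\frac{g(1)-g(1/2)}{1/2}$, but the derivative computation is still needed, so it cannot really be dispensed with.
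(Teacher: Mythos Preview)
Your proof is correct and follows essentially the same route as the paper: both exploit the convexity of $g(\theta)=\ln\phi(x^{\theta})$ from Theorem~\ref{t2}, compare the chord slope between $\tfrac12$ and $1$ with $g'(1)=\phi(x\ln x)/\phi(x)$, and read off the inequality. The only cosmetic difference is that the paper first normalizes to $\phi(x)=1$ so that the chord slope becomes $-2\ln\phi(x^{1/2})$ directly, whereas you carry $\phi(x)$ through and do the algebraic rewriting at the end.
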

    \begin{proof} Replacing $x$ with $\frac{x}{\phi(x)}$ if necessary, we may assume, without loss of generality, that $\phi(x)=1$. The inequality then reduces to $$\phi(x\ln x)\geq -2\ln\phi(x^{1/2}).$$
    By theorem \ref{t2}, $f(\theta)=\ln\phi(x^{\theta})$ is convex on $[0,1]$ and therefore, the function $h(\theta)=\frac{\ln\phi(x^{\theta})}{\theta-1}$ is increasing on $[0,1)$ and converges to $f^{\prime}(1)=\phi(x\ln x)$ as $\theta\to 1-$. This implies that $$h(1/2)=-2\ln\phi(x^{1/2})\leq\phi(x\ln x).$$\end{proof}
    \begin{cor}\label{c4}Let $A,B$ be positive definite density matrices. Then the function $\theta\to\ln\textrm{tr}(A^{\theta}B^{1-\theta})$ is convex on $[0,1]$,\end{cor}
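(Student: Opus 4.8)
The plan is to deduce this from Theorem~\ref{t2} by realizing $\textrm{tr}(A^{\theta}B^{1-\theta})$ as the value of a fixed vector state applied to the $\theta$-th power of a single fixed positive invertible operator. Note that the statement is also a verbatim restatement of the log-convexity in Theorem~\ref{t1}, since ``$f$ log-convex'' means ``$\ln f$ convex''; the point of recording it here is to give the promised alternate derivation, one that goes through Theorem~\ref{t2} rather than through majorization.

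First I would work in the Hilbert space $H=M_n(\mathbb{C})$ equipped with the Hilbert--Schmidt inner product $\langle X,Y\rangle=\textrm{tr}(X^{*}Y)$, and take $\mathcal{A}=B(H)$, which is a unital $C^{*}$-algebra. Define $L_A,\,R_{B^{-1}}\in\mathcal{A}$ by $L_A(X)=AX$ and $R_{B^{-1}}(X)=XB^{-1}$. Since $A$ and $B$ are positive definite, short computations with $\langle\cdot,\cdot\rangle$ show that $L_A$ and $R_{B^{-1}}$ are positive and invertible, that they commute, and that $L_A^{\theta}=L_{A^{\theta}}$ and $R_{B^{-1}}^{\theta}=R_{B^{-\theta}}$ for $\theta\in[0,1]$ (the maps $L$ and $R$ are compatible with the functional calculus). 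Hence $x:=L_A R_{B^{-1}}$ is positive and invertible in $\mathcal{A}$, and $x^{\theta}(X)=A^{\theta}XB^{-\theta}$ for all $\theta\in[0,1]$.

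Next, because $B$ is a density matrix, $\xi:=B^{1/2}\in H$ is a unit vector, $\|\xi\|^{2}=\textrm{tr}(B)=1$, so $\phi(T):=\langle\xi,T\xi\rangle$ is a state on $\mathcal{A}$ (a vector state; in particular it is strictly positive on positive invertible elements and completely positive, which is all the argument in Theorem~\ref{t2} actually uses). Using cyclicity of the trace,
\[\phi(x^{\theta})=\big\langle B^{1/2},\,A^{\theta}B^{1/2}B^{-\theta}\big\rangle=\textrm{tr}\big(B^{1/2}A^{\theta}B^{1/2-\theta}\big)=\textrm{tr}(A^{\theta}B^{1-\theta}).\]
Applying Theorem~\ref{t2} to $x$ and $\phi$ then yields that $\theta\mapsto\textrm{tr}(A^{\theta}B^{1-\theta})$ is log-convex on $[0,1]$, i.e.\ that $\theta\mapsto\ln\textrm{tr}(A^{\theta}B^{1-\theta})$ is convex on $[0,1]$; this simultaneously re-proves Theorem~\ref{t1}.

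I do not expect a genuine obstacle here: the whole content is the translation into the $C^{*}$-algebra language. The only points needing care are the routine verifications that $L_A$ and $R_{B^{-1}}$ are self-adjoint, positive, commuting, and compatible with taking $\theta$-th powers, and the observation that normalizing the reference vector as $B^{1/2}$ (rather than an arbitrary multiple of it) makes $\phi$ unital, so that Theorem~\ref{t2} applies directly.
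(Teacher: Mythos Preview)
Your proposal is correct and essentially identical to the paper's own proof: the paper also passes to the Hilbert--Schmidt Hilbert space $M_n(\mathbb{C})$, takes the vector state at $B^{1/2}$ on $B(M_n(\mathbb{C}))$, and applies Theorem~\ref{t2} to the relative modular operator $\Delta(X)=AXB^{-1}$, which is exactly your $x=L_AR_{B^{-1}}$. The only difference is cosmetic---you spell out $\Delta$ as a product of commuting left and right multiplications and give slightly more detail on why $x^{\theta}(X)=A^{\theta}XB^{-\theta}$.
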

    \begin{proof} Consider $M_n(\mathbb{C})$ as a Hilbert space with the Hilbert-Schmidt inner product $$\langle X,Y\rangle = \textrm{tr}(Y^{*}X).$$ Let $A,B$ be positive definite density matrices and $B(M_{n}(\mathbb{C}))$ be the $C^{*}$-algebra of operators on $M_{n}(\mathbb{C})$. Consider the pure state $\phi:B(M_n(\mathbb{C}))\to\mathbb{C}$ given by $$\phi(T)=\langle TB^{1/2},B^{1/2}\rangle$$ for all $T\in B(M_n(\mathbb{C}))$ and the relative modular operator $\Delta:M_n(\mathbb{C})\to M_n(\mathbb{C})$ given by $$\Delta(X)=AXB^{-1}$$ for all $X\in M_n(\mathbb{C})$. Being a composition of two commuting positive definite operators, namely the left multiplication by $A$ and the right multiplication by $B^{-1}$, $\Delta$ is also positive definite. We observe that $$\textrm{tr}(A^{\theta}B^{1-\theta})=\langle\Delta^{\theta}B^{1/2}, B^{1/2}\rangle = \phi(\Delta^{\theta})$$ and conclude the proof by invoking Theorem \ref{t2}.\end{proof}

    \begin{rem}
    \end{rem} 
    \begin{enumerate}
    \item The inequality in Corollary \ref{c3} is stronger than Jensen's inequality for the convex function $t\to t\ln t$ since $\phi(x)^{1/2}\geq \phi(x^{1/2})$.\\ 
    
    \item The relative modular operator used in Corollary \ref{c4} is positive and therefore, admits a functional calculus, which can be used to define a large class of quantum entropies called $f$-divergences. We consider a function $f:(0,\infty)\to\mathbb{R}$, often assumed to be operator convex or concave, and define the $f$-divergence of two positive definite density matrices $A$ and $B$ as $\langle f(\Delta)B^{1/2}, B^{1/2}\rangle$. This coincides with the Kullback-Liebler and the $\theta$-divergences when $f(x)=x\ln x$ and $x^{\theta}$ respectively. See \cite{hp} and \cite{dp} for more details on $f$-divergences.\\
    
    \item From Corollary \ref{c4}, we get the inequality \[-2\ln\textrm{tr}(A^{1/2}B^{1/2})\leq D(A|B).\] See \cite{lp} for a different proof. Since $\textrm{tr}(A^{1/2}B^{1/2})\leq F(A|B)$, it follows that $$-2\ln F(A|B)\leq D(A|B).$$ This has an important physical interpretation : if the divergence of $A$ with respect to $B$ is small, the fidelity between $A$ and $B$ is large.
    \end{enumerate}

    \section{Sandwiched R\'enyi entropy, relative entropy difference and recoverability}
    M\"uller-Lennert et. al. (\cite{ml}) and Wilde et. al. (\cite{ww}) defined a $t$-parametrized version of $F(A|B)$ for density matrices $A$ and $B$ with $\textrm{supp}(A)\subset\textrm{supp}(B)$ as
    \[\mathcal{S}_{t}(A|B)=\textrm{tr}(B^{\frac{1-t}{2t}}AB^{\frac{1-t}{2t}})^{t}\label{e9}\tag{9}\] for all $t\in(0,\infty)$. This quantity is known as the sandwiched quasi-relative entropy, which coincides with the usual fidelity at $t=\frac{1}{2}$. Observe that $$\textrm{tr}(B^{\frac{1-t}{2t}}AB^{\frac{1-t}{2t}})^{t}\textgreater 0$$ for all $t\in(0, \infty)$ if $\textrm{supp}(A)\subset\textrm{supp}(B)$. It has been proved in \cite{lf} that $\mathcal{S}_{t}(A|B)$ is jointly concave in $A$ and $B$ if $t\in[\frac{1}{2}, 1)$ and jointly convex if $t\in(1,\infty)$. Some extremal characterizations of $\mathcal{S}_{t}(A|B)$ can be found in \cite{bjl}. For $t\in(0,1)\cup(1,\infty)$
    the sandwiched R\'enyi relative entropy is defined as \[\mathcal{D}_{t}(A|B)=\frac{\ln \mathcal{S}_{t}(A|B)}{t-1}.\label{e10}\tag{10}\] 
    
    The next theorem gives the log-convexity of $\mathcal{S}_{t}(A|B)$ in $[\frac{1}{2},1]$. For that, we need a matrix version of Riesz-Thorin interpolation. 
    \begin{lem}\label{l2} Let $S=\{z\in\mathbb{C}:0\leq\textrm{Re}(z)\leq1\}$ and $\Phi:S\to M_n(\mathbb{C})$ be a bounded continuous function that is holomorphic in the interior of $S$. Let $p_0, p_1\geq 1$. For $\theta\in[0,1]$ let $p_{\theta}$ given by $$\frac{1}{p_{\theta}}=\frac{1-\theta}{p_0}+\frac{\theta}{p_1}.$$
    If $$\sup_{t\in\mathbb{R}}||\Phi(\theta+it)||_{p_{\theta}}\textgreater 0$$ for all $\theta\in[0,1]$, the map $$\theta\to\sup_{t\in\mathbb{R}}||\Phi(\theta+it)||_{p_{\theta}}$$ is log-convex on $[0,1]$\end{lem}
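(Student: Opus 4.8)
The plan is to prove this matrix Riesz--Thorin / Stein interpolation statement by the classical duality-plus-Hadamard argument. First I would reduce it to a single pair of edge estimates. Since $1/p_\theta$ is affine in $\theta$, for any $0\le a<b\le 1$ the reparametrized function $z\mapsto\Phi(a+(b-a)z)$ again satisfies the hypotheses on $S$, with edge exponents $p_a$ and $p_b$; so it is enough to prove, for every bounded analytic $\Phi$ on $S$ and every $p_0,p_1\ge1$,
\[M_\theta\ :=\ \sup_{t\in\mathbb{R}}\|\Phi(\theta+it)\|_{p_\theta}\ \le\ M_0^{\,1-\theta}M_1^{\,\theta}\qquad(0<\theta<1),\]
and then applying this bound with $\Phi$ replaced by its reparametrized restriction to a substrip yields log-convexity of $\theta\mapsto M_\theta$ on all of $[0,1]$ (the hypothesis $M_\theta>0$ makes the logarithm meaningful). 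Finally, since replacing $\Phi(z)$ by $\Phi(z+it_0)$ leaves $M_0$ and $M_1$ unchanged, it suffices to bound $\|\Phi(\theta)\|_{p_\theta}$ at the single point $z=\theta$.

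Let $q_j$ be the exponent conjugate to $p_j$; then $1/q_\theta=(1-\theta)/q_0+\theta/q_1$ as well. Assuming $\Phi(\theta)\ne0$, fix a (reduced) singular value decomposition $\Phi(\theta)=\sum_k s_k u_k v_k^{*}$ with $s_k>0$ and set $Y_0=\|\Phi(\theta)\|_{p_\theta}^{1-p_\theta}\sum_k s_k^{\,p_\theta-1}\,v_k u_k^{*}$; a direct computation gives $\tr\big(\Phi(\theta)Y_0\big)=\|\Phi(\theta)\|_{p_\theta}$ and $\|Y_0\|_{q_\theta}=1$. Writing $\sigma_k>0$ for the singular values of $Y_0$, so that $\sum_k\sigma_k^{q_\theta}=1$, the heart of the argument is to interpolate the dual variable: with the entire scalar function $\alpha(z)=(1-z)\,q_\theta/q_0+z\,q_\theta/q_1$ (reading $q_\theta/q_j$ as $0$ whenever $q_j=\infty$), put
\[Y(z)\ =\ \sum_k\sigma_k^{\,\alpha(z)}\,v_k u_k^{*},\qquad z\in S.\]
Since $\alpha(\theta)=1$ we have $Y(\theta)=Y_0$, and since $\sigma_k^{\alpha(z)}=e^{\alpha(z)\ln\sigma_k}$ with $\Re\alpha$ bounded on $S$, the scalar function $F(z):=\tr\big(\Phi(z)Y(z)\big)$ is bounded and continuous on $S$ and holomorphic in its interior. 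On the edge $\Re z=0$ one computes $\|Y(it)\|_{q_0}^{q_0}=\sum_k\sigma_k^{\,q_0\Re\alpha(it)}=\sum_k\sigma_k^{q_\theta}=1$ (with $Y(it)$ a partial isometry, hence $\|Y(it)\|_\infty\le1$, in the case $q_0=\infty$), and similarly $\|Y(1+it)\|_{q_1}\le1$; Hölder's inequality for Schatten norms then gives $|F(it)|\le M_0$ and $|F(1+it)|\le M_1$ for all real $t$.

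Hadamard's three-lines theorem applied to $F$ now gives $|F(\theta)|\le M_0^{1-\theta}M_1^{\theta}$, and since $F(\theta)=\tr\big(\Phi(\theta)Y_0\big)=\|\Phi(\theta)\|_{p_\theta}$ this is exactly the bound sought; taking the supremum over the translations $t_0$ and then over substrips completes the proof. The step I expect to require the most care is the construction of the family $Y(z)$: choosing the affine exponent $\alpha$ so that $Y(\theta)=Y_0$ while both edge norms remain $\le1$, and in particular the endpoint cases $p_0=1$ or $p_1=1$ (equivalently $q_j=\infty$), where $\sigma_k^{\alpha(z)}$ degenerates, $Y(it)$ becomes a partial isometry, and the requirement $\|Y(it)\|_{q_0}\le1$ must be read as an operator-norm bound. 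The remaining pieces --- the reparametrization reductions, the Schatten--Hölder inequality, and the invocation of Hadamard --- are routine.
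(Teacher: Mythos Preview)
The paper does not supply its own proof of this lemma; it simply states that a proof can be found in Beigi's paper \cite{sb}. Your argument is the standard Stein--interpolation proof for Schatten norms: dualize via the trace pairing, interpolate the dual element $Y(z)$ with an affine exponent $\alpha(z)$ so that the edge $q_j$-norms stay $\le 1$, apply Hadamard's three-lines theorem to the scalar function $F(z)=\tr(\Phi(z)Y(z))$, and then pass to substrips to upgrade the single three-lines bound to full log-convexity. All of the verifications you outline (the SVD-based construction of $Y_0$, the computation $\alpha(\theta)=1$, the edge norm bounds including the $q_j=\infty$ partial-isometry case, boundedness and analyticity of $F$, and the reparametrization reduction) are correct, and this is essentially the approach taken in the cited reference.
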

    Here, for a given matrix $X$ and $r\geq 1$, $||X||_p$ denotes the Schatten-$p$ norm of $X$ given by $$||X||_p = (\textrm{tr}|X|^{p})^{1/p}.$$
    A proof of Lemma \ref{l2} can be found in \cite{sb} where interpolation theory is used to show the joint convexity of $\mathcal{F}_{t}(A|B)$ when $t\textgreater 1$. A detailed account of complex interpolation theory can be found in \cite{rs} and \cite{ss}.
    \begin{thm}\label{t7} The function $t\to\ln \mathcal{S}_{t}(A|B)$ is convex on $[\frac{1}{2},1]$\end{thm}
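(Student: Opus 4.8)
The plan is to write $\mathcal{F}_{t}(A|B)$ as a power of a Schatten norm and feed a suitable holomorphic family into Lemma \ref{l2}. We may assume $B$ is invertible (otherwise work on $\mathrm{supp}(B)\supseteq\mathrm{supp}(A)$). Since $\|X\|_{2t}^{2t}=\mathrm{tr}(X^{*}X)^{t}=\mathrm{tr}(XX^{*})^{t}$, taking $X=B^{\frac{1-t}{2t}}A^{1/2}$ gives $XX^{*}=B^{\frac{1-t}{2t}}AB^{\frac{1-t}{2t}}$, hence $\ln\mathcal{F}_{t}(A|B)=2t\ln\big\|B^{\frac{1-t}{2t}}A^{1/2}\big\|_{2t}$, and the goal is to show this is convex in $t$ on $[\tfrac12,1]$. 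To that end I would fix $t_{0},t_{1}\in[\tfrac12,1]$, put $p_{i}=2t_{i}\ (\geq1)$ and $a_{i}=\tfrac{1-t_{i}}{2t_{i}}\ (\geq0)$, and consider the entire $M_{n}(\mathbb{C})$-valued map $\Phi(z)=B^{(1-z)a_{0}+za_{1}}A^{1/2}$; on the strip $S$ its exponent has real part between $a_{0}$ and $a_{1}$, hence nonnegative, so $\Phi$ is bounded, continuous on $S$, and holomorphic in its interior.

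On the line $\mathrm{Re}\,z=\theta$ the exponent equals $(1-\theta)a_{0}+\theta a_{1}+it(a_{1}-a_{0})$, so $\Phi(\theta+it)=B^{(1-\theta)a_{0}+\theta a_{1}}\,B^{it(a_{1}-a_{0})}A^{1/2}$; since $B^{it(a_{1}-a_{0})}$ is unitary and commutes with $B^{(1-\theta)a_{0}+\theta a_{1}}$, unitary invariance of the Schatten norms gives $\|\Phi(\theta+it)\|_{p_{\theta}}=\big\|B^{(1-\theta)a_{0}+\theta a_{1}}A^{1/2}\big\|_{p_{\theta}}$ for all $t\in\mathbb{R}$, where $\tfrac1{p_{\theta}}=\tfrac{1-\theta}{p_{0}}+\tfrac{\theta}{p_{1}}$. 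Writing $\tfrac1{t_{\theta}}:=(1-\theta)\tfrac1{t_{0}}+\theta\tfrac1{t_{1}}$, one checks $(1-\theta)a_{0}+\theta a_{1}=\tfrac{1-t_{\theta}}{2t_{\theta}}$ and $p_{\theta}=2t_{\theta}$, so this common value is $\mathcal{F}_{t_{\theta}}(A|B)^{1/p_{\theta}}$, which is positive since $A\neq0$. Lemma \ref{l2} then yields that $\theta\mapsto\mathcal{F}_{t_{\theta}}(A|B)^{1/p_{\theta}}$ is log-convex on $[0,1]$.

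Taking logarithms and using $\tfrac1{p_{\theta}}=\tfrac1{2t_{\theta}}$, this reads
\[\frac{1}{2t_{\theta}}\ln\mathcal{F}_{t_{\theta}}(A|B)\leq(1-\theta)\frac{1}{2t_{0}}\ln\mathcal{F}_{t_{0}}(A|B)+\theta\frac{1}{2t_{1}}\ln\mathcal{F}_{t_{1}}(A|B).\]
Setting $u_{i}=1/t_{i}\in[1,2]$ and $\chi(u):=\tfrac{u}{2}\ln\mathcal{F}_{1/u}(A|B)$, the above is exactly $\chi\big((1-\theta)u_{0}+\theta u_{1}\big)\leq(1-\theta)\chi(u_{0})+\theta\chi(u_{1})$, and as $u_{0},u_{1},\theta$ are arbitrary this says $\chi$ is convex on $[1,2]$. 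Finally $\ln\mathcal{F}_{t}(A|B)=2t\,\chi(1/t)$, and $t\mapsto2t\,\chi(1/t)$, being (twice) the perspective of the convex function $\chi$ restricted to the line $u=1$, is convex on $[\tfrac12,1]$ (equivalently $\tfrac{d^{2}}{dt^{2}}\big(2t\,\chi(1/t)\big)=\tfrac{2}{t^{3}}\chi''(1/t)\geq0$). This is the assertion of Theorem \ref{t7}.

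The main effort here is bookkeeping rather than a genuine obstacle: the holomorphic family must be arranged so that the Riesz–Thorin exponent $p_{\theta}$ coincides with $2t_{\theta}$, while the exponent of $B$ stays affine in $z$ (so that its imaginary part contributes only a commuting unitary, killed by unitary invariance of the norms) and nonnegative on $S$ (so that $\Phi$ is bounded) — and the hypothesis $t_{i}\leq1$ is precisely what secures the last point. The one conceptual step is the last one: Lemma \ref{l2} only produces convexity with respect to the reciprocal parameter $1/t$, weighted by $1/(2t)$, so the perspective-function observation is needed to convert this into honest convexity of $\ln\mathcal{F}_{t}(A|B)$ in $t$.
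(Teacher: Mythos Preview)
Your proof is correct and follows essentially the same route as the paper: both write $\ln\mathcal{F}_{t}(A|B)=2t\ln\|B^{\frac{1-t}{2t}}A^{1/2}\|_{2t}$, reparametrize by (a translate of) $u=1/t$, apply Lemma~\ref{l2} to a holomorphic family $z\mapsto B^{\text{affine}(z)}A^{1/2}$ using that the imaginary part of the exponent contributes only a unitary commuting with the real part, and then pass back to $t$ via the perspective identity $2t\,\chi(1/t)$. The only cosmetic difference is that the paper fixes the interpolation once over $\theta=\tfrac{1-t}{t}\in[0,1]$ with $\Phi(z)=A^{1/2}B^{z/2}$ and $p_\theta=\tfrac{2}{1+\theta}$, whereas you interpolate between two arbitrary endpoints $t_0,t_1$; since your $\chi(u)$ coincides with the paper's $g(u-1)$, the arguments are the same in substance.
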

    \begin{proof} Let $t\in[\frac{1}{2},1]$. Then \[\begin{aligned}\mathcal{S}_t(A|B)&=\textrm{tr}(B^{\frac{1-t}{2t}}AB^{\frac{1-t}{2t}})^{t}\\
    	&= \textrm{tr}|A^{\frac{1}{2}}B^{\frac{1-t}{2t}}|^{2t}\\&= \big|\big|A^{\frac{1}{2}}B^{\frac{1-t}{2t}}\big|\big|_{2t}^{2t}.\end{aligned}\]
    	Let $h:[\frac{1}{2},1]\to\mathbb{R}$ be given by $h(t)=\ln \mathcal{S}_{t}(A|B)$. Define maps $\xi:[\frac{1}{2}, 1]\to[0,1]$ and $g:[0,1]\to\mathbb{R}$ by $$\xi(t)=\frac{1-t}{t}$$ and $$g(\theta)=\ln\big|\big|A^{\frac{1}{2}}B^{\frac{\theta}{2}}\big|\big|_{\frac{2}{1+\theta}}$$ respectively, and note that $$h(t)=\ln \mathcal{S}_{t}(A|B)=2t\hspace{0.75mm}g(\xi(t))=2t\hspace{0.75mm}g(\frac{1-t}{t}).$$ We now show that the convexity of $h$ on $[\frac{1}{2},1]$ is equivalent to the convexity of $g$ on $[0,1]$. Assume first that $g$ is convex. We have to show that for $t,s\in[\frac{1}{2},1]$, $$h\big(\frac{t+s}{2}\big)\leq\frac{h(t)+h(s)}{2}$$ which is equivalent to
    	$$g\big(\frac{2}{t+s}-1\big)\leq\frac{t}{t+s}g\big(\frac{1-t}{t}\big)+\frac{t}{t+s}g\big(\frac{1-s}{s}\big).$$ But this is the same as
    	$$g\big(\frac{t}{t+s}(\frac{1-t}{t})+\frac{s}{t+s}(\frac{1-s}{s})\big)\leq\frac{t}{t+s}g\big(\frac{1-t}{t}\big)+\frac{t}{t+s}g\big(\frac{1-s}{s}\big)$$
    	which follows directly from the convexity of $g$. Conversely, assume $h$ is convex and observe that $$g(\theta)=\frac{(1+\theta)h\big(\frac{1}{1+\theta}\big)}{2}$$ for all $\theta\in [0,1]$. A similar argument shows that $g$ is convex. 
    	
    	Consider the strip $S=\{z\in\mathbb{C}:0\leq\textrm{Re}(z)\leq 1\}$ and the map $\Phi:S\to M_n(\mathbb{C})$ given by $$\Phi(z)=A^{\frac{1}{2}}B^{\frac{z}{2}}.$$
    	It is easy to observe that $\Phi$ is continuous and bounded on $S$, and holomorphic in its interior. Consider the family of Schatten norms $$p_{\theta}=\frac{2}{1+\theta}$$ for $\theta\in[0,1]$. Then, for all $y\in\mathbb{R}$, $\theta\in[0,1]$,
    	\[\begin{aligned}\big|\big|\Phi(\theta+iy)\big|\big|_{p_{\theta}}&=\big|\big|A^{\frac{1}{2}}B^{\frac{\theta}{2}}B^{\frac{iy}{2}}\big|\big|_{p_{\theta}}\\ &=\big|\big|A^{\frac{1}{2}}B^{\frac{\theta}{2}}\big|\big|_{p_{\theta}}\\&=\big|\big|\Phi(\theta)\big|\big|_{p_{\theta}}\end{aligned}\] Since $$g(\theta)=\ln\big|\big|\Phi(\theta)\big|\big|_{p_{\theta}}$$ for all $\theta\in[0,1]$, Lemma \ref{l2} implies the convexity of $g$, thereby proving the convexity of $h$.
    \end{proof}
    An immediate corollary follows :
    \begin{cor}\label{c5} The sandwiched R\'enyi relative entropy $\mathcal{D}_{t}(A|B)$ given by equation \eqref{e10} is monotonically increasing in $t$ on the interval $[\frac{1}{2},1)$\end{cor}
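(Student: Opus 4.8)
The plan is to deduce this directly from Theorem \ref{t7}, exactly in the spirit of the deduction of Corollary \ref{c2} from Theorem \ref{t1}. Write $f(t)=\ln\mathcal{F}_{t}(A|B)$. First I would record that $f$ is finite on the whole interval $[\tfrac12,1]$: for $t\in[\tfrac12,1)$ the exponent $\tfrac{1-t}{2t}$ is nonnegative so $B^{\frac{1-t}{2t}}$ is well defined, and under the hypothesis $\textrm{supp}(A)\subset\textrm{supp}(B)$ one has $\textrm{tr}(B^{\frac{1-t}{2t}}AB^{\frac{1-t}{2t}})^{t}>0$, as noted above, so $f(t)\in\mathbb{R}$; at $t=1$ the exponent vanishes and $\mathcal{F}_{1}(A|B)=\textrm{tr}(A)=1$, whence $f(1)=0$.

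Next I would rewrite the sandwiched R\'enyi relative entropy as a difference quotient of $f$:
\[
S_{t}(A|B)=\frac{\ln\mathcal{F}_{t}(A|B)}{t-1}=\frac{f(t)-f(1)}{t-1}.
\]
This is precisely the slope of the secant line through $(t,f(t))$ and $(1,f(1))$ on the graph of $f$.

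Finally I would invoke the standard monotonicity property of difference quotients of a convex function: if $g$ is convex on an interval $I$ and $y\in I$ is fixed, then $x\mapsto\frac{g(x)-g(y)}{x-y}$ is nondecreasing on $I\setminus\{y\}$ (this is the same elementary fact quoted in the introduction for $\ln f$ in the log-convex case, applied here to $g=f$ itself). Since $f$ is convex on $[\tfrac12,1]$ by Theorem \ref{t7}, taking $y=1$ shows that $t\mapsto\frac{f(t)-f(1)}{t-1}=S_{t}(A|B)$ is monotonically increasing on $[\tfrac12,1)$, which is the assertion. There is essentially no obstacle here: the only point requiring a word of care is the finiteness of $f$ at the endpoint $t=1$ and the observation $f(1)=0$, both of which are immediate from the support condition; everything else is the convexity already established in Theorem \ref{t7} together with a property of convex functions.
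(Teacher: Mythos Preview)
Your proposal is correct and is exactly the argument the paper intends: the corollary is stated as an ``immediate'' consequence of Theorem \ref{t7}, and your write-up (observing $f(1)=0$ and that $S_{t}(A|B)$ is the secant slope of the convex function $f$ at the fixed endpoint $t=1$) is precisely how that immediacy unpacks. No additional ingredients are needed.
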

    
    Another proof of monotonicity can be found in \cite{ml}.

    We now look at a R\'enyi generalization of the relative entropy difference. Let $A, B\in M_n(\mathbb{C})$ be density matrices with $\textrm{supp}(A)\subset\textrm{supp}(B)$ and let $\phi:M_n(\mathbb{C})\to M_k(\mathbb{C})$ be a CPTP map. Consider the Petz recovery map $\mathcal{R}_{\phi, B}:M_{k}(C)\to M_{n}(\mathbb{C})$ given by $$\mathcal{R}_{\phi, B}(Y)=B^{1/2}\phi^{*}(\phi(B)^{-1/2}Y\phi(B)^{-1/2})B^{1/2}$$ for all $Y\in M_{k}(\mathbb{C})$. Observe that $\mathcal{R}_{\phi, B}(\phi(B))=B$. It has been proved (\cite{hp, pr}) that $R_{\phi, B}(\phi(A))=A$ if and only if $$D(A|B)=D(\phi(A)|\phi(B))$$
    By Stinespring's theorem, we have a positive integer $m$ and an isometry $V:\mathbb{C}^n\to\mathbb{C}^{m}\otimes\mathbb{C}^{k}$ such that $$\phi(X)=\textrm{tr}_{2}\hspace{0.75mm}VXV^{*}$$ for all $X\in M_n(\mathbb{C})$ where $\textrm{tr}_2$ denotes the second partial trace. Following \cite{mw} and \cite{ws} a R\'enyi generalization of the relative entropy difference can be defined as
    $$\tilde{\Delta}_t(A, B,\phi)=\frac{2t}{t-1}\ln\big|\big|(I_m\otimes\phi(A)^{\frac{1-t}{2t}}\phi(B)^{\frac{t-1}{2t}})VB^{\frac{1-t}{2t}}A^{\frac{1}{2}}\big|\big|_{2t}$$ for $t\in[\frac{1}{2},1)\cup(1,\infty)$. Note that \[\begin{aligned}&\big|\big|(I_m\otimes\phi(A)^{\frac{1-t}{2t}}\phi(B)^{\frac{t-1}{2t}})VB^{\frac{1-t}{2t}}A^{\frac{1}{2}}\big|\big|_{2t}^{2t}\\
    &=\textrm{tr}\big(A^{\frac{1}{2}}B^{\frac{1-t}{2t}}\phi^{*}(\phi(B)^{\frac{t-1}{2t}}\phi(A)^{\frac{1-t}{t}}\phi(B)^{\frac{t-1}{2t}})B^{\frac{1-t}{2t}}A^{\frac{1}{2}}\big)^{t}\end{aligned}\label{e12}\tag{12}\] for all $t\in[\frac{1}{2},1)$. At $t=\frac{1}{2}$ this coincides with $$F(A|\mathcal{R}_{\phi,B}(\phi(A)))=\textrm{tr}(A^{1/2}\mathcal{R}_{\phi,B}(\phi(A))A^{1/2})^{1/2}$$ which is the fidelity between $A$ and $\mathcal{R}_{\phi,B}(\phi(A))$. Hence, the expression in $t$ in \eqref{e12} can be seen as a parametrized version of this fidelity. It was proved in \cite{ws} by Seshadreesan et. al. and in \cite{mw} by Wilde that \[\mathop\lim\limits_{t\to 1}\tilde{\Delta}_t(A, B,\phi)= D(A|B)-D(\phi(A)|\phi(B)).\] In view of the previous results obtained, it is natural to wonder if the map $$t\to2t\hspace{0.75mm}\ln\big|\big|(I_m\otimes\phi(A)^{\frac{1-t}{2t}}\phi(B)^{\frac{t-1}{2t}})VB^{\frac{1-t}{2t}}A^{\frac{1}{2}}\big|\big|_{2t}$$ is convex on $[\frac{1}{2},1]$, which, by our argument in the proof of Theorem \ref{t7}, is equivalent to log-convexity of the map \[\theta\to\big|\big|(I_m\otimes\phi(A)^{\frac{\theta}{2}}\phi(B)^{-\frac{\theta}{2}})VB^{\frac{\theta}{2}}A^{\frac{1}{2}}\big|\big|_{\frac{2}{1+\theta}}\label{e13}\tag{13}\] on $[0,1]$. If this is true, both the monotonicty of $\tilde{\Delta}_t(A,B,\phi)$ and inequality \eqref{e4}, as conjectured by Seshadreesan et. al. in \cite{ws}, would follow. So our first step would be to observe the behaviour of this map for some special cases. The following theorem sheds some light on its log-convexity for certain choices of $A$.
     
    \begin{thm}\label{t9} Let $A\in M_{n}(\mathbb{C})$ be a pure state and let $B\in M_{n}(\mathbb{C})$ be positive definite. Let $\phi:M_n(\mathbb{C})\to M_n(\mathbb{C})$ be the spectral pinching along $A$. Then \[-2\ln[F(A|\mathcal{R}_{\phi, B}(\phi(A)))]\leq D(A|B)-D(\phi(A)|\phi(B)).\].\end{thm}
    \begin{proof} Let $A=x\otimes x^{*}$ for some unit vector $x$. By \eqref{e13}, the inequality follows if the function $g:[0,1]\to\mathbb{R}$ given by $$g(\theta)=||A\phi(B)^{-\theta/2}B^{\theta/2}A||_{\frac{2}{1+\theta}}$$ is log-convex. Let $$B=\sum_j\sigma_j v_j\otimes v_j^{*}$$ be the spectral decomposition of $B$ and $\lambda=\langle Bx,x\rangle$. Observe that $$g(\theta)=\lambda^{-\theta/2}\langle B^{\theta/2}x,x\rangle$$ $$=\sum_j \big(\frac{\sigma_j}{\lambda}\big)^{\theta/2}|\langle x, v_j\rangle|^2$$ which is log-convex by Theorem \ref{t2}.
    \end{proof}
    
    Henceforth, we assume $A\in M_2(\mathbb{C})$ is a pure state and $\phi: M_2(\mathbb{C})\to M_2(\mathbb{C})$ is the pinching along the main diagonal. If $A$ is a candidate for a counterexample, Theorem \ref{t9} suggests that the off-diagonal entries of $A$ must be non-zero. Let \[x=\begin{pmatrix} s\\ t\end{pmatrix}\] where $s, t\in\mathbb{C}\setminus\{0\}$ with $|s|^2+|t|^2=1$. 
    Let \[A=x\otimes x^*= \begin{pmatrix}|s|^2 & s\bar{t}\\ \bar{s}t & |t|^2\end{pmatrix}.\] Let \[B=\begin{pmatrix}a & \bar{c}\\ c & b\end{pmatrix}\] be a positive definite density matrix. Since $\phi$ is the pinching along the diagonal, the map in \eqref{e13} now becomes \[g(\theta)= ||\phi(A)^{\theta/2}\phi(B)^{-\theta/2}B^{\theta/2}A||_{\frac{2}{1+\theta}}\label{e14}\tag{14}.\] for all $\theta\in [0,1]$.  Consider the spectral decomposition \[B=\lambda\hspace{0.7mm} v_1\otimes v_1^* +\mu\hspace{0.7mm} v_2\otimes v_2^*.\] A straightforward calculation shows that \[\begin{aligned}g(\theta)^2 &= \alpha^{2\theta}(\lambda^{\theta}|\gamma\langle v_1,e_1\rangle|^2+\mu^{\theta}|\delta \langle v_2,e_1\rangle|^2)\\&+\beta^{2\theta}(\lambda^{\theta}|\gamma\langle v_1,e_2\rangle|^2+\mu^{\theta}|\delta \langle v_2,e_2\rangle|^2)\\&+2\lambda^{\theta/2}\mu^{\theta/2}(\alpha^{2\theta}-\beta^{2\theta})\textrm{Re}(\gamma\bar{\delta} \langle v_1,e_1\rangle \overline{\langle v_2, e_1\rangle})\end{aligned}\] for all $\theta\in [0,1]$. Here $\alpha=|s|a$, $\beta=|t|b$, $z=\langle x, v_1\rangle$, and $w=\langle x, v_2\rangle$. Again, the sum of the first two terms is log-convex, so problems with log-convexity can only arise if the third term is non-zero. We can, infact, further simplify things by taking $x$ and $B$ to have real entries, in which case the third term will simply become \[2\lambda^{\theta/2}\mu^{\theta/2}\gamma\delta\langle v_1,e_1\rangle \langle v_2, e_1\rangle(\alpha^{2\theta}-\beta^{2\theta}).\label{e15}\tag{15}\] The spectral theorem for real symmetric matrices ensures that $v_1$ and $v_2$ have real entries, so $\gamma$ and $\delta$ are real. Thus, $g(\theta)^2$ (and hence, $g(\theta)$) is log-convex if the expression in \eqref{e15} vanishes, thereby giving the desired inequality. Hence, any candidate for a counterexample should not make this term vanish. To further simplify, lets take \[A=\begin{pmatrix}\frac{1}{2} & \frac{1}{2}\\[6pt]\frac{1}{2} & \frac{1}{2}\end{pmatrix}\] so that it is only required to choose $B$ judiciously. Now, since $s=t=\frac{1}{\sqrt{2}}$, the diagonal entries $a$ and $b$ of $B$ cannot be equal, as it would then imply $\alpha^{2\theta}-\beta^{2\theta}=0$. Also if $B$ is diagonal, $\langle v_1, e_1\rangle \langle v_2, e_1\rangle=0$. Hence, we need to choose $B$ such that it is not diagonal, and the diagonal entries are not equal. We tried out several examples, and the following disproved inequality \eqref{e4}, thus disproving the conjecture of Seshadreesan et. al. as well as log-convexity of the map in \eqref{e13} :
    
    \begin{example} Let $$A=\begin{pmatrix}\frac{1}{2} & \frac{1}{2}\\[6pt]\frac{1}{2} & \frac{1}{2}\end{pmatrix}$$ and $$B=\begin{pmatrix}\frac{3}{4} & -\frac{1}{4}\\[6pt]-\frac{1}{4} & \frac{1}{4}\end{pmatrix}$$ be two density matrices. $A$ is a rank one projection and $B$ is positive definite. Let $\phi:M_2(\mathbb{C})\to M_2(\mathbb{C})$ be the pinching along the main diagonal, so that $$\phi(A)=\begin{pmatrix}\frac{1}{2} & 0\\[6pt]0 & \frac{1}{2}\end{pmatrix}$$ and $$\phi(B)=\begin{pmatrix}\frac{3}{4} & 0\\[6pt]0 & \frac{1}{4}\end{pmatrix}.$$ Computaions using GNU Octave then reveal that $$D(A|B)-D(\phi(A)|\phi(B))\approx 1.5191$$ while $$-2\ln F(A|\mathcal{R}_{\phi, B}(\phi(A)))\approx 1.5349.$$
    \end{example}
    
    \begin{rem}Even though inequality \eqref{e4} fails to be true with the Petz recovery map, interpolation techniques have been used by Wilde in \cite{mw} and Junge et. al. in \cite{jw} to refine the data processing inequality and show the existence of recovery maps for which the analogous versions of inequality \eqref{e4} hold. In particular, it has been shown in \cite{jw} that there exists a recovery map $\mathcal{R}^{\prime}_{\phi, B}$ depending only on $\phi$ and $B$ such that \[-2\ln[F(A|\mathcal{R}^{\prime}_{\phi, B}(\phi(A)))]\leq D(A|B)-D(\phi(A)|\phi(B))\label{e16}\tag{16}.\] The map $\mathcal{R}^{\prime}$ is given by $$\mathcal{R}^{\prime}_{\phi, B}(Y)=\int_{\mathbb{R}}\mathcal{R}^{t}_{\phi, B}(Y)d\mu(t)$$ for a Borel probability measure $\mu$ on $\mathbb{R}$, where $\mathcal{R}^{t}_{\phi,B}$ is the rotated Petz recovery map defined in equation \eqref{e6}.\end{rem}
    
    \section{Approximate recovery with the Petz map}
    
    In view of the counterexample in the previous section, it is natural to wonder whether we can have inequalities which can be interpreted in terms of approximate recoverability with the Petz map. Carlen and Vershynina \cite{cv} obtained the first result of this kind. They showed that given a $C^{*}$-subalgebra $\mathcal{A}$ of $M_n(\mathbb{C})$ and positive definite density matrices $A$ and $B$, \[\begin{aligned}&[1-F(A|\mathcal{R}(A_0))]^4\\&\leq ||A-\mathcal{R}(A_0)||^4_1\\&\leq K||A^{-1}||^2[D(A|B)-D(A_0|B_0)]\end{aligned}\label{e17}\tag{17}\] where $K$ is a constant independent of $A$, and $A_0, B_0$ are the conditional expectations of $A$ and $B$ onto $\mathcal{A}$. Vershynina \cite{ver} generalized inequality \eqref{e17} to quantum $f$-divergences, and later, Gao and Wilde \cite{gw} proved similar results for optimized $f$-divergences, including the sandwiched quasi-relative entropies. However, in all these results, the relative entropy difference is scaled by factors which are dependent on $A$ in such a way that that it is not possible to bound them from above by a constant. This begs the question, when can such dependency be removed? Inequalities achieving this would then capture the phenomenon of {\it uniform} approximate recoverability with the Petz map. This is the main focus of this section. We prove such an inequality for the sandwiched quasi-relative entropy $\mathcal{S}_2$ with a constant factor that depends only on $B$ and the dimension of the initial system. This is stronger than inequality \eqref{e8} as mentioned in the introduction. We also show that our inequality is asymptotically sharp, thereby confirming that the dimension factor is necessary.  
    
    First we show that uniform approximate recoverability is qualitatively possible.
    \begin{thm} Let $B\in M_n(\mathbb{C})$ be a positive definite density matrix and let $\phi:M_n(\mathbb{C})\to M_k(\mathbb{C})$ be a channel. Let $\mathcal{D}$ be any relative entropy. Then the following are equivalent :
    	\begin{enumerate}
    		\item[(i)] For any density matrix $A$, $\mathcal{D}(A|B)=\mathcal{D}(\phi(A)|\phi(B))$ if and only if $\mathcal{R}\circ\phi(A)=A$.
    		
    		\item[(ii)] for every $\varepsilon\textgreater 0$ there exists $\delta\textgreater 0$ such that $F(A|\mathcal{R}\circ\phi(A))\geq 1-\varepsilon$ for all $A$ with $\mathcal{D}(A|B)-\mathcal{D}(\phi(A)|\phi(B))\leq\delta$.
    	\end{enumerate}
    \end{thm} 
\begin{proof} (ii) implies (i) is obvious. To show the converse, we argue by contradiction. Let there exist $\varepsilon\textgreater 0$ and a sequence $\{A_n\}$ of density matrices such that $\mathcal{D}(A_n|B)-\mathcal{D}(\phi(A_n)|\phi(B))\to 0$ but $F(A_n|\mathcal{R}\circ\phi(A_n))\leq 1-\varepsilon$. Choosing a subsequence $\{A_{n_j}\}$ converging to some density matrix $A_0$, and using continuity of $\mathcal{D}$ and $F$, \[\mathcal{D}(A_0|B)=\mathcal{D}(\phi(A_0)|\phi(B))\] and \[F(A|\mathcal{R}\circ\phi(A))\leq 1-\varepsilon.\] But from (i), $\mathcal{R}\circ\phi(A)=A$, implying $F(A|\mathcal{R}\circ\phi(A))=1$ which is a contradiction .\end{proof}
    The next theorem refines inequality \eqref{e17} to show that the factor $||A^{-1}||^2$ can be replaced with $||A^{-1}||$. We can achieve this by modifying the proof in Carlen and Vershynina \cite{cv} to a certain degree.
    \begin{thm}\label{t10}Let $A, B\in M_n(\mathbb{C})$ be positive definite density matrices and let $\mathcal{A}\subset M_n(\mathbb{C})$ be a unital $C^{*}$-subalgebra. Then there exists a constant $K$ depending only on $B$ such that \[||A-\mathcal{R}(A_0)||^4_1\leq K||A^{-1}||[D(A|B)-D(A_0|B_0)]\] where $A_0, B_0$ denote the conditional expectations of $A$ and $B$ respectively onto $\mathcal{A}$. \end{thm}
    \begin{proof}Let $\Delta$ and $\Delta_0$ be the relative modular operators corresponding to $A, B$ and $A_0, B_0$ respectively. Let $V:\mathcal{A}\to M_n(\mathbb{C})$ be the isometry given by $V(X)=XB_0^{-1/2}B^{1/2}$ for all $X\in\mathcal{A}$. Note that $V^{*}\Delta V=\Delta_0$. The Umegaki relative entropy is the $f$-divergence corresponding to the function $x\to x\ln x$ on $(0,\infty)$. Using the integral representation \[x\ln x=\int_{0}^{\infty}\big(\frac{x}{1+t}-\frac{x}{x+t}\big) dt\] we get\[\begin{aligned}&D(A|B)-D(A_0|B_0)\\&=\int_{0}^{\infty}\big[\langle\Delta_0(\Delta_0+t)^{-1}B_0^{1/2}, B_0^{1/2}\rangle - \langle \Delta(\Delta+t)^{-1}B^{1/2}, B^{1/2}\rangle\big]dt\\&=\int_{0}^{\infty}t\big[\langle (\Delta+t)^{-1}B^{1/2}, B^{1/2}\rangle-\langle (\Delta_0+t)^{-1}B_0^{1/2}, B_0^{1/2}\rangle\big]dt\\&=\int_{0}^{\infty}t\langle (\Delta+t)w_t, w_t\rangle dt\end{aligned}\] where \[w_t= V(\Delta_0+t)^{-1}B_0^{1/2}-(\Delta+t)^{-1}B^{1/2}\] for all $t\in(0,\infty)$. The smallest eigenvalue of $\Delta$ is $\frac{1}{||A^{-1}|| ||B||}$, which implies \[D(A|B)-D(A_0|B_0)\geq \frac{1}{||A^{-1}|| ||B||}\int_{0}^{\infty}t||w_t||_2^2 dt.\label{e18}\tag{18}\] The rest of the argument is the same as that in Carlen and Vershynina \cite{cv}. We include it here for the sake of completeness. Using the integral representation \[X^{1/2}=\frac{1}{\pi}\int_{0}^{\infty}t^{1/2}\big(\frac{1}{t}-\frac{1}{X+t}\big)dt\] for the square root, we obtain \[A^{1/2}-A_0^{1/2}B_0^{-1/2}B^{1/2}=\frac{1}{\pi}\int_{0}^{\infty}t^{1/2}w_t dt.\label{e19}\tag{19}.\] Taking the Frobenius norm on both sides, \[||A^{1/2}-A_0^{1/2}B_0^{-1/2}B^{1/2}||_2\leq\frac{1}{\pi}[\int_{0}^{\delta}t^{1/2}||w_t||_2 dt+||\int_{\delta}^{\infty}t^{1/2}w_t dt||_2]\] for any $\delta\textgreater 0$. Applying Cauchy-Schwarz on the first term, and inequality \eqref{e18}, \[\int_{0}^{\delta}t^{1/2}||w_t||_2 dt\leq\big[\delta||A^{-1}|| ||B|| \big(D(A|B)-D(A_0|B_0)\big)\big]^{1/2}.\label{e20}\tag{20}\] To estimate the second term, we observe that \[\begin{aligned}&\big|\big|\int_{\delta}^{\infty}t^{1/2}w_t dt\big|\big|_2\\&=\big|\big|\int_{\delta}^{\infty}t^{1/2}[V(\Delta_0+t)^{-1}B_0^{1/2}-(\Delta+t)^{-1}B^{1/2}]dt\big|\big|_2\\&=\big|\big|\int_{\delta}^{\infty}[V t^{1/2}(t^{-1}-(\Delta_0+t)^{-1})B_0^{1/2}-t^{1/2}(t^{-1}-(\Delta+t)^{-1})B^{1/2}]dt\big|\big|_2\\&=\big|\big|\int_{\delta}^{\infty}[V \frac{\Delta_0(\Delta_0+t)^{-1}}{t^{1/2}}B_0^{1/2}-\frac{\Delta(\Delta+t)^{-1}}{t^{1/2}}B^{1/2}]dt\big|\big|_2.\end{aligned}\label{e21}\tag{21}\] Note that for any positive operator $X$ on a Hilbert space, \[||X(t+X)^{-1}||=\frac{||X||}{t+||X||}\] for any $t\textgreater 0$. Hence, for any $\delta\textgreater0$, \[\begin{aligned}&\int_{\delta}^{\infty}\big|\big|\frac{X(t+X)}{t^{1/2}}\big|\big|dt\\&\leq\int_{\delta}^{\infty}\frac{||X||}{t^{1/2}(t+||X||)}dt\\&\leq\frac{2||X||}{\delta^{1/2}}.\end{aligned}\] Using this and \eqref{e21}, \[\begin{aligned}&\big|\big|\int_{\delta}^{\infty}t^{1/2}w_t dt\big|\big|_2\\&\leq\frac{2}{\delta^{1/2}}\big(||\Delta||+||\Delta_0||\big)\\&\leq\frac{2}{\delta^{1/2}}\big(||A||||B^{-1}||+||A_0||||B_0^{-1}||\big).\end{aligned}\] Since the conditional expectation is an operator norm contraction, $||A_0||\leq||A||$. The function $x\to x^{-1}$ is operator convex on $(0,\infty)$, and hence, Jensen's inequality implies \[(B_0)^{-1}\leq (B^{-1})_0.\] Combining these, we have \[\begin{aligned}&\big|\big|\int_{\delta}^{\infty}t^{1/2}w_t dt\big|\big|_2\\&\leq\frac{4}{\delta^{1/2}}||A||||B^{-1}||\\&\leq\frac{4}{\delta^{1/2}}||B^{-1}||.\end{aligned}\label{e22}\tag{22}\] By \eqref{e21} and \eqref{e22}, \[\begin{aligned}&||A^{1/2}-A_0^{1/2}B_0^{-1/2}B^{1/2}||_2\\&\leq\frac{1}{\pi} \big[\delta||A^{-1}|| ||B|| \big(D(A|B)-D(A_0|B_0)\big)\big]^{1/2}+\frac{4}{\delta^{1/2}}||B^{-1}||.\end{aligned}\]
    Minimizing over $\delta$ and using Lemma $2.2$ of \cite{cv}, \[\begin{aligned}&||A-\mathcal{R}(A_0)||_1\\&\leq 2||A^{1/2}-A_0^{1/2}B_0^{-1/2}B^{1/2}||_2\\&\leq\frac{8}{\pi}||B^{-1}||^{1/2}||A^{-1}||^{1/4}[D(A|B)-D(A_0|B_0)].\end{aligned}\] Taking $4^{\textrm{th}}$ powers and denoting $\frac{4096||B^{-1}||^2}{\pi^4}$ by $K$, \[\begin{aligned}&||A-\mathcal{R}(A_0)||_1^4\\&\leq K||A^{-1}||[D(A|B)-D(A_0|B_0)].\end{aligned}\]   \end{proof}
    Let $B$ be a positive definite density matrix. It was proved by Jen\v{c}ov\'a in \cite{jen} that given a channel $\phi:M_n(\mathbb{C})\to M_k(\mathbb{C})$ and density matrices $A, B$ with supp$(A)\subset$ supp$(B)$, for all $p\geq 1$, \[S_p(\phi(A)|\phi(B))=S_p(A|B)\hspace{1.7mm}\textrm{iff}\hspace{1.7mm}\mathcal{R}(\phi(A))=A\] where $\mathcal{R}$ is the Petz recovery map.
    Let $B\in M_n(\mathbb{C})$ be a positive definite density matrix. Consider the inner product on $M_n(\mathbb{C})$ given by $$\langle X, Y\rangle_B = \textrm {tr} (B^{-1/2}Y^*B^{-1/2}X)$$ for all $X, Y\in M_{n}(\mathbb{C})$. From \eqref{e9} its clear that \[||A||^2_B=\langle A, A\rangle^{1/2}=\mathcal{S}_2(A|B).\] Let $\phi:M_n(\mathbb{C})\to M_k(\mathbb{C})$ be a channel and assume $\phi(B)$ is positive definite too (otherwise, we can restrict the target Hilbert space to be supp$(\phi(B))$). Equipping $M_k(\mathbb{C})$ with the inner product \[\langle Z, W\rangle_{\phi(B)}=\textrm{tr}(\phi(B)^{-1/2}W^*\phi(B)^{-1/2}Z)\] for all $Z, W\in M_k(\mathbb{C})$, we observe that the adjoint of $\phi$ as a map from $(M_n(\mathbb{C}), \langle \rule{2mm}{0.1mm},\rule{2mm}{0.1mm}\rangle_{B})$ to $(M_k(\mathbb{C}), \langle \rule{2mm}{0.1mm},\rule{2mm}{0.1mm}\rangle_{\phi(B)})$ is the Petz recovery map $\mathcal{R}$. As shown by Jen\v{c}ov\'a in \cite{jen}, this gives a simple proof of recoverability with the Petz map for the sandwiched R\'enyi entropy $S_2$. In the subsequent theorems, we use some elementary Hilbert space techniques to derive refinements of the DPI for $S_2$ meaningful in the context of approximate recoverability. We state and prove some lemmas which will be useful in obtaining the results.
    \begin{lem}\label{l3} Let $\mathcal{H}, \mathcal{K}$ be Hilbert spaces and let $T:\mathcal{H}\to\mathcal{K}$ be a contraction. Then we have the following : \begin{enumerate}
    		\item[(i)] for all $x\in \mathcal{H}$, \[||x-T^*Tx||^2\leq ||x||^2-||Tx||^2,\hspace{1mm}\mathrm{and}\]
    		\item[(ii)] for all $x$ such that $||x||\textgreater 1$, \[||Tx||^2\geq\frac{||Tx||^2-1}{||x||^2-1}||x||^2.\]
    	\end{enumerate}\end{lem}
    \begin{proof} For part (i), observe that \[||x-T^*Tx||^2=||x||^2-2||Tx||^2+\langle (T^*T)^2x,x\rangle.\] Since $(T^*T)^2\leq T^*T$, it follows that \[\begin{aligned}||x-T^*Tx||^2&\leq||x||^2-2||Tx||^2+\langle T^*Tx,x\rangle\\ &= ||x||^2-||Tx||^2\end{aligned}.\] 
    	
    	Part (ii) is a simple consequence of the fact that $||Tx||^2\leq ||x||^2$ for all $x\in\mathcal{H}$.
    \end{proof}
    \begin{lem}\label{l4} Let $B\in M_n(\mathbb{C})$ be a positive definite density matrix. Then \begin{enumerate}\item[(i)]for all density matrices $A$, \[||A||^2_B\leq ||B^{-1}|| ||A||,\hspace{1mm}\mathrm{and}\]
    		\item[(ii)]for all $X\in M_n(\mathbb{C})$, \[||X||^2_1\leq ||X||^2_B,\] where $||.||_1$ denotes the trace norm of a matrix.\end{enumerate}\end{lem}
    	\begin{proof} Let $\sigma_1\geq\sigma_{2}\geq\cdots\geq\sigma_n\textgreater 0$ be the eigenvalues of $B$ with corresponding orthonormal eigenvectors $\{v_j\}_{j=1}^n$. For part (i), note that the Araki-Lieb-Thirring inequality (see IX.2.10 of \cite{rb}) implies \[\begin{aligned}||A||^2_B&=\textrm{tr}(B^{-1/4}AB^{-1/4})^2\\&\leq\textrm{tr}A^2B^{-1}\\&=\sum_j\frac{1}{\sigma_j}\langle A^2v_j, v_j\rangle\\&\leq\frac{1}{\sigma_n}\textrm{tr}A^2\\&\leq||B^{-1}||||A||\textrm{tr}A\\&=||B^{-1}||||A||\end{aligned}.\]
    		For part (ii), assume without loss of generality that $B$ is diagonal. Let \[B=\begin{pmatrix}
    		\sigma_{1} & & \\
    		& \ddots & \\
    		& & \sigma_{n}
    		\end{pmatrix}\] where $\sigma_1\geq\sigma_{2}\geq\cdots\geq\sigma_n\textgreater 0$. Let $X=(x_{ij})$. Then, \[\begin{aligned}||X||_B^2&=\sum_{i,j}\frac{|x_{ij}|^2}{\sqrt{\sigma_i\sigma_j}}\\&\geq\sum_{i,j}\frac{2|x_{ij}|^2}{\sigma_i+\sigma_j}.\end{aligned}\]
    	Let $U=[u_{ij}]$ be a unitary. Then \[\begin{aligned}|\mathrm{tr\hspace{1mm}}U^*X|^2&=|\sum_{i,j}x_{ij}\bar{u}_{ij}|^2\\&=|\sum_{i,j}\sqrt{\frac{2}{\sigma_i+\sigma_j}}x_{ij}\sqrt{\frac{\sigma_i+\sigma_j}{2}}\bar{u}_{ij}|^2\\&\leq\big(\sum_{i,j}\frac{2|x_{ij}|^2}{\sigma_i+\sigma_j}\big)\frac{\big(\sum_{i}\sigma_i\sum_{j}|u_{ij}|^2+\sum_{j}\sigma_j\sum_{i}|u_{ij}|^2\big)}{2}\\&=\sum_{i,j}\frac{2|x_{ij}|^2}{\sigma_i+\sigma_j}\\&\leq ||X||_B^2.\end{aligned}\] Maximizing over unitaries, we are done.
    		\end{proof} 
    
    Let $A, B$ be $n\times n$ density matrices where $B$ is positive definite and $A\neq B$. Let $\phi:M_n(\mathbb{C})\to M_k(\mathbb{C})$ be a channel and assume $\phi(B)$ is positive definite. As observed before, $\phi$ is a contraction from $(M_n(\mathbb{C}),\langle\rule{1.7mm}{0.1mm},\rule{1.7mm}{0.1mm}\rangle_B)$ to $(M_k(\mathbb{C}),\langle\rule{1.7mm}{0.1mm},\rangle_{\phi(B)}))$. Let $\mathcal{W}_{A,B}$ be the subspace spanned by $A$ and $B$. Restricting $\phi$ to $\mathcal{W}_{A,B}$, we see that its adjoint is $P_{A,B}\circ\mathcal{R}$, where $P_{A,B}$ is the orthogonal projection onto $\mathcal{W}_{A,B}$. Observe that $P_{A,B}\circ\mathcal{R}\circ\phi(B)=B$ and $\langle A-B,B\rangle_{B}=0$. Since $\phi$ is a contraction, the spectral theorem applied to $P_{A,B}\circ\mathcal{R}\circ\phi$ now implies the existence of a $\lambda_{A,B}\in[0,1]$ such that \[P_{A,B}\circ\mathcal{R}\circ\phi(A-B)=\lambda_{A,B}(A-B)\] which implies \[P_{A,B}\circ\mathcal{R}\circ\phi(A)=\lambda_{A,B}A+(1-\lambda_{A,B})B.\] Its easy to see that $\lambda_{A,B}=1$ iff $\mathcal{R}(\phi(A))=A$. Infact, we have an explicit expression for $\lambda_{A,B}$, given by \[\begin{aligned}\lambda_{A,B}&=\frac{\langle\mathcal{R}\circ\phi(A),A\rangle_B-1}{||A||^2_B-1}\\&=\frac{||\phi(A)||^2_{\phi(B)}-1}{||A||^2_{B}-1}\\&=\frac{\mathcal{S}_2(\phi(A)|\phi(B))-1}{\mathcal{S}_2(A|B)-1}.\end{aligned}\label{e23}\tag{23}\] Intuitively, $\lambda^{\phi}_{A,B}$ is a measure of how close $\mathcal{R}(\phi(A))$ is to $A$, and this is made precise in the following theorem.
    \begin{thm}\label{t11} Let $A, B$ be $n\times n$ density matrices such that $B$ is positive definite and $A\neq B$. Let $\phi:M_n(\mathbb{C})\to M_k(\mathbb{C})$ be a channel. Then \[||A-\mathcal{R}(\phi(A))||^2_2\leq ||A-\mathcal{R}(\phi(A))||^2_{B}||B||\leq ||B||(||B^{-1}||-1)(1-\lambda_{A,B})\].\end{thm}
    \begin{proof} The first inequality is obvious from Lemma \ref{l4}. To deduce the second, note that by Lemma \ref{l3}, \[\begin{aligned}\hspace{4mm}||\mathcal{R}\circ\phi(A)-A||^2_B&\leq ||A||^2_B-||\phi(A)||^2_{\phi(B)}\\&=(||A||^2_B-1)(1-\lambda_{A,B})\hspace{5mm}[\textrm{by}    
    \hspace{2mm}\eqref{e23}]\\&\leq(1-\lambda_{A,B})(||B^{-1}||-1)\end{aligned}\] where the last inequality follows from Lemma \ref{l4}. \end{proof}
    Part (ii) of Lemma \ref{l3} also yields the following corollary :
    \begin{cor}\label{c6} Let $A, B$ be $n\times n$ density matrices such that $B$ is positive definite and $A\neq B$. Let $\phi:M_n(\mathbb{C})\to M_k(\mathbb{C})$ be a channel. Then \[\mathcal{S}_2(\phi(A)|\phi(B))\geq \lambda_{A,B}\mathcal{S}_2(A|B).\]\end{cor}
    Corollary \ref{c6} has an interesting physical interpretation. Recall that $\mathcal{S}_2$, as a relative entropy, is a measure of distinguishability and $\lambda_{A,B}$ is a measure of closeness between $A$ and $\mathcal{R}\circ\phi(A)$. The inequality then states that the channel $\phi$ preserves atleast as much distinguishability information as there is between $A$ and $B$ scaled by a factor depending on how accurately the Petz map $\mathcal{R}$ recovers $A$.
    We now prove one of our main results connecting approximate recoverability with the Petz map and the sandwiched R\'enyi relative entropy $S_2$.
    \begin{thm}\label{t12} Let $A, B\in M_n(\mathbb{C})$ be density matrices such that $B$ is positive definite and let $\phi: M_n(\mathbb{C})\to M_k(\mathbb{C})$ be a channel. Let $q:[1,\infty]\to[1,\infty]$ be the harmonic conjugate map given by $q(t)=\frac{t}{t-1}$ for al $t\in [1,\infty]$. Then \[-\ln \lambda_{A,B}\leq q(||\phi(A)||^2_{\phi(B)})[\mathcal{D}_2(A|B)-\mathcal{D}_2(\phi(A)|\phi(B))].\]\end{thm}
    \begin{proof} It suffices to prove the inequality for states $A$ such that $||\phi(A)||^2_{\phi(B)}\textgreater 1$. Denoting $q(||\phi(A)||^2_{\phi(B)})$ by $\rho$, we note that the inequality is equivalent to \[\frac{||A||^{2\rho}_{B}}{||A||^2_B-1}\geq\frac{||\phi(A)||^{2\rho}_{\phi(B)}}{||\phi(A)||^2_{\phi(B)}-1}.\label{e24}\tag{24}\]
    		Consider  $f:(1,\infty)\to\mathbb{R}$ given by \[f(t)=\frac{t^{\rho}}{t-1}.\] It is easy to see that for all $t\geqslant ||\phi(A)||^2_{\phi(B)}$, $f^{\prime}(t)\geqslant 0$, thereby showing that $f$ is increasing on $\{t:t\geqslant||\phi(A)||^2_{\phi(B)}\}$. Inequality \eqref{e18} now follows since $||A||^2_{B}\geqslant||\phi(A)||^2_{\phi(B)}$.
    	\end{proof}
    As a simple corollary, we have :
    \begin{cor}Let $A, B\in M_n(\mathbb{C})$ be density matrices such that $B$ is positive definite and let $\phi: M_n(\mathbb{C})\to M_k(\mathbb{C})$ be a channel. Let $q:[1,\infty]\to[1,\infty]$ be the harmonic conjugate map given by $q(t)=\frac{t}{t-1}$ for al $t\in [1,\infty]$. Then \[-\ln \big[1-\frac{||A-\mathcal{R}\circ\phi(A)||_2^2}{\kappa(B)-||B||}\big]\leq q(||\phi(A)||^2_{\phi(B)})[\mathcal{D}_2(A|B)-\mathcal{D}_2(\phi(A)|\phi(B))]\] where $\kappa(B)$ denotes the condition number of $B$.\end{cor}
    \begin{rem2} In Theorem \ref{t12}, the scale factor on the right and side of the inequality is dependent of $\phi(A)$, and infact, it tends towards infinity the closer $\phi(A)$ gets to $\phi(B)$. Thus, for any $\delta\textgreater 0$, the inequality \[-\ln \lambda_{A,B}\leq (1+\frac{1}{\delta})[\mathcal{D}_2(A|B)-\mathcal{D}_2(\phi(A)|\phi(B))]\] holds for all $A$ such that $||\phi(A)||^2_{\phi(B)}\geq 1+\delta$. Informally, this means that all $A$ for which $\phi(A)$ and $\phi(B)$ are distinguishable beyond a predetermined threshold can be approximately recovered with the Petz map.\end{rem2}
    We state and prove our next theorem, which refines the DPI for $\mathcal{S}_2$ and can be interpreted in terms of approximate recoverability with the fidelity. This was initially obtained by Gao et. al. in \cite{lg} using quantum Fisher Information techniques. For details on the Fisher information, refer to \cite{pet1, pet}.
    
    \begin{thm}\label{t13} Let $A, B\in M_n(\mathbb{C})$ be density matrices such that $B$ is positive definite and let $\phi:M_n(\mathbb{C})\to M_k(\mathbb{C})$ be a channel. Then, \[4\big[1-F(A|\mathcal{R}\circ\phi(A))\big]^2\leq ||A-\mathcal{R}\circ\phi (A)||_1^2\leq [\mathcal{S}_2(A|B)-\mathcal{S}_2(\phi(A)|\phi(B))].\]\end{thm}
    \begin{proof}Observe that \[\begin{aligned}& 4[1-F(A|\mathcal{R}\circ\phi(A))]^2\\&\leq ||A-\mathcal{R}\circ\phi(A)||_1^2\\&\leq ||A-\mathcal{R}\circ\phi(A)||_B^2\end{aligned}\] where the last inequality follows from lemma \ref{l4}. Now, by lemma \ref{l3}, \[\begin{aligned}&||A-\mathcal{R}\circ\phi(A)||_B^2\\&\leq[\mathcal{S}_2(A|B)-\mathcal{S}_2(\phi(A)|\phi(B))].\end{aligned}\] \end{proof}
    As an immediate corollary, we deduce the following refinement of the DPI for $\mathcal{D}_2$ :
    \begin{cor}\label{c7}Let $A, B\in M_n(\mathbb{C})$ be density matrices such that $B$ is positive definite and let $\phi:M_n(\mathbb{C})\to M_k(\mathbb{C})$ be a channel. Then, \[||A-\mathcal{R}\circ\phi(A)||_1^2\leq ||A||||B^{-1}||[\mathcal{D}_2(A|B)-\mathcal{D}_2(\phi(A)|\phi(B))].\] \end{cor}
    \begin{proof} By the mean value theorem applied to the map $t\to\ln t$ on the interval $\{t:||\phi(A)||^2_{\phi(B)}\leq t\leq ||A||^2_B\}$, we see that \[\begin{aligned}&||A||^2_B-||\phi(A)||^2_{\phi(B)}\\&\leq||A||^2_B[\ln||A||^2_B-\ln||\phi(A)||^2_{\phi(B)}]\\&=||A||^2_B[\mathcal{D}_2(A|B)-\mathcal{D}_2(\phi(A)|\phi(B))]\\&\leq||A||||B^{-1}||[\mathcal{D}_2(A|B)-\mathcal{D}_2(\phi(A)|\phi(B))]\end{aligned}\] where the last  inequality follows from Lemma \ref{l4}.\end{proof}
    
    Corollary \ref{c7} gives an approximate recoverability result with the sandwiched R\'enyi relative entropy $\mathcal{D}_2$. The scaling factor of $||A||$ on the right hand side tells us that the bound works better the smaller $||A||$ is. Thus, it says that mixed states can be recovered to a higher degree of accuracy than pure states.
    \vspace{2mm}
    
    It is natural to ask whether the factor $||A||$ in corollary \ref{c7} can be further refined. This is explored in the next theorems. For a pair of positive reals $a$ and $b$, we denote their logarithmic mean by $\mathcal{L}(a,b)$.
    \begin{lem}\label{l5} Let $A, B\in M_n{C}$ be density matrices and let $\phi:M_n(\mathbb{C})\to M_k(\mathbb{C})$ be a channel. Then, \[||A-\mathcal{R}\circ\phi(A)||_1^2\leq\mathcal{L}(||A||.||B^{-1}||, ||\phi(A)||.||\phi(B)^{-1}||)[\mathcal{D}_2(A|B)-\mathcal{D}_2(\phi(A)|\phi(B))].\]\end{lem}
    \begin{proof}By theorem \ref{t13}, \[\begin{aligned}&||A-\mathcal{R}\circ\phi(A)||_1^2\\&\leq\mathcal{S}_2(A|B)-\mathcal{S}_2(\phi(A)|\phi(B))\\&=\mathcal{L}(||A||_B^2, ||\phi(A)||_{\phi(B)}^2)[\mathcal{D}_2(A|B)-\mathcal{D}_2(\phi(A)|\phi(B))]\\&\leq\mathcal{L}(||A||.||B^{-1}||, ||\phi(A)||.||\phi(B)^{-1}||)[\mathcal{D}_2(A|B)-\mathcal{D}_2(\phi(A)|\phi(B))].\end{aligned}\] The last inequality follows from lemma \ref{l4} and monotonicity of the log mean.\end{proof}
    The inequality in theorem \ref{t14} is a refinement of corollary \ref{c7} whenever $||\phi(A)||.||\phi(B)^{-1}||\leq||A||.||B^{-1}||$. The next theorem shows that this is always true for conditional expectations.
    \begin{thm}\label{t14} Let $\mathcal{A}$ be a $C^{*}$-subalgebra of $M_n(\mathbb{C})$. For any $X\in M_n(\mathbb{C})$ let $X_0$ be its conditional expectation onto $\mathcal{A}$. Then, for all $A, B\in M_n(\mathbb{C})$ such that $A$ is positive and $B$ is positive definite, we have, \[|||A_0|||.|||B_0^{-1}|||\leq|||A|||.|||B^{-1}|||\] for any unitarily invariant norm $|||.|||$ on $M_n(\mathbb{C})$.\end{thm}
    \begin{proof}Note that \[A_0=\sum_{j=1}^k\mu_jU_jAU_j^*\] for some $\{\mu_j\}_{j=1}^k$ such that $\mu_j\geq 0$ for all $j$, $\sum_j\lambda_j=1$ and unitaries $U_j\in\mathcal{A}^{\prime}$, where $\mathcal{A}^{\prime}$ denotes the commutant of $\mathcal{A}$. Thus, $$|||A_0|||\leq|||A|||$$ for any unitarily invariant norm. Also, by the operator Jensen inequality, $$B_0^{-1}\leq (B^{-1})_0.$$ By Weyl's inequalities, \[\lambda^{\downarrow}_{j}(B_0^{-1})\leq\lambda^{\downarrow}_{j}((B^{-1})_0)\] where $\lambda^{\downarrow}_{j}$ denotes the $j^{\mathrm{th}}$ largest eigenvalue. Hence, the eigenvalues of $B_0^{-1}$ are weakly majorized by the eigenvalues of $(B^{-1})_0$. Therefore, $$|||B^{-1}_0|||\leq|||(B^{-1})_0|||\leq|||B^{-1}|||$$ for any unitarily invariant norm $|||.|||$.\end{proof}
    Since $|||B_0^{-1}|||\leq|||B^{-1}|||$ for any unitarily invariant norm $|||.|||$, it follows that for any two density matrices $A, B$ and a conditional expectation $X\to X_0$ onto some $C^*$-subalgebra $\mathcal{A}\subset M_n(\mathbb{C})$, \[||A-\mathcal{R}(A_0)||_1^2\leq\mathcal{L}(||A||,||A_0||)||B^{-1}||[\mathcal{D}_2(A|B)-\mathcal{D}_2(A_0|B_0)].\label{e25}\tag{25}\] This improves the factor $||A||$ in corollary \ref{c7} to $\mathcal{L}(||A||, ||A_0||)$.
    
    Inequality \eqref{e25} is sharp, as demonstrated by the following example :
    
    \begin{example}Let $\mathcal{A}\subset M_{2n}(\mathbb{C})$ be the subalgebra generated by $I_{2n}$. Take $$A=\frac{1}{n}\begin{pmatrix}I_n & O\\O & O\end{pmatrix}$$ and $B=\frac{1}{2n}I_{2n}$.
    It is easy to see that the left hand side and right hand side  of inequality \eqref{e25} are both $1$. Note that the same example doesn't work for corollary \ref{c7} as in this case, $\mathcal{L}(||A||, ||A_0||)\textless ||A||.$\end{example}
    Its not true in general that $\mathcal{L}(||A||.||B^{-1}||, ||\phi(A)||.||\phi(B)^{-1}||)\leq ||A||.||B^{-1}||$ for any channel $\phi$. For example, take any $t\in (1/2,1)$ and consider $\phi:M_2(\mathbb{C})\to M_2(\mathbb{C})$ given by \[\phi(X)=\mathrm{tr }X\begin{pmatrix}t & 0\\0 & 1-t\end{pmatrix}\] for all $X\in M_2(\mathbb{C})$. Take density matrices $A=B=I/2$. Then, $||A||.||B^{-1}||=1$ but \[\begin{aligned}&\mathcal{L}(||A||.||B^{-1}||, ||\phi(A)||.||\phi(B)^{-1}||)\\&=\mathcal{L}\big(1,\frac{t}{1-t}\big)\textgreater 1.\end{aligned}\] However, the refinement holds if $\phi$ is a partial trace, which the next inequality demonstrates.
    \begin{thm}\label{t15} Let $A, B\in M_n(\mathbb{C})\otimes M_k(\mathbb{C})$ be positive matrices such that $B$ is invertible. Let $|||.|||$ be any unitarily invariant tensor norm on matrices. Then \[|||\mathrm{tr}_2 A|||.|||(\mathrm{tr}_2 B)^{-1}|||\leq\frac{|||A|||.|||B^{-1}|||}{|||I_n|||}.\]\end{thm}
    \begin{proof}Note that \[\begin{aligned}&|||\mathrm{tr}_2 A|||.|||(\mathrm{tr}_2 B)^{-1}|||\\&=\frac{|||I_n\otimes\frac{\mathrm{tr}_2 A}{n}|||.|||(I_n\otimes\frac{\mathrm{tr}_2 B}{n})^{-1}|||}{||I_n|||^2}\\&\leq \frac{|||A|||.|||B^{-1}|||}{|||I_n|||^2}\end{aligned}\] where the last inequality follows from Theorem \ref{t14} because for all $X\in M_n(\mathbb{C})\otimes M_k(\mathbb{C})$, $I_n\otimes\frac{\mathrm{tr}_2 X}{n}$ is the conditional expectaion of $X$ onto the subalgebra $\{I_n\otimes Y:Y\in M_k(\mathbb{C})\}$.\end{proof}
   \begin{rem2}
   	Theorem \ref{t15} holds for $\mathrm{tr}_1$ as well. In this case, the denominator on the right hand side becomes $|||I_k|||$ instead of $|||I_n|||$.
   \end{rem2}
    \begin{acknowledgement} I thank my PhD supervsior Prof. Tanvi Jain for her valuable comments and suggestions during the preparation of this paper. I also thank Prof. Rajendra Bhatia for his insightful comments on improving the exposition.\end{acknowledgement}

   \textbf{Conflict of interest :} The author declares no conflict of interest.
   \vspace{3mm}
   
   \textbf{Data availability :}  No available data has been used. 
   
	\bibliographystyle{amsplain}
		
\end{document}